\renewcommand{\Function}[2]{%
	\csname ALG@cmd@\ALG@L @Function\endcsname{#1}{#2}%
	\def\jayden@currentfunction{#1}%
}
\newcommand{\funclabel}[1]{%
	\@bsphack
	\protected@write\@auxout{}{%
		\string\newlabel{#1}{{\jayden@currentfunction}{\thepage}}%
	}%
	\@esphack
}
\newcommand{\stakedag}{{\rm StakeDag}}
\newtheorem{thm}{Theorem}[section]
\newtheorem{lem}[thm]{Lemma}
\newtheorem{prop}[thm]{Proposition}
\newcommand{\dfnn}[2]{ \textbf{\emph{[#1]}} {#2}}
\newcommand{\eself}{\hookrightarrow^{s}}
\newcommand{\eref}{\hookrightarrow^{r}}
\newcommand{\eancestor}{\hookrightarrow^{a}} 
\newcommand{\eselfancestor}{\hookrightarrow^{sa}} 
\newcommand{\erefz}{\hookrightarrow}
\newcommand{\efork}{\pitchfork}
\newcommand{\hibefore}{\mapsto}
\newcommand{\hbefore}{\rightarrow}
\newcommand{\concur}{\parallel}
\def\BState{\State\hskip-\ALG@thistlm}
\title{StakeDag: Stake-based Consensus for Scalable Trustless Systems}
\author{Quan Nguyen, Andre Cronje, Michael Kong, Alex Kampa, George Samman}
\affil{FANTOM}
\begin{document}
\maketitle

\begin{abstract}
Trustless systems, such as those blockchain enpowered, provide trust in the system regardless of the trust of its participants, who may be honest or malicious. Proof-of-stake (PoS) protocols and DAG-based approaches have emerged as a better alternative than the proof of work (PoW) for consensus.

This paper introduces a new model, so-called \emph{\stakedag}, which aims for PoS consensus in a DAG-based trustless system. We address a general model of trustless system in which participants are distinguished by their stake or trust: users and validators. Users are normal participants with a no assumed trust and validators are high profile participants with an established trust.

We then propose a new family of stake-based consensus protocols $\mathfrak{S}$, operating on the DAG as in the Lachesis protocol~\cite{lachesis01}.
Specifically, we propose a stake-based protocol $S_\phi$ that leverages participants' stake as  validating weights to achieve more secure distributed systems with practical Byzantine fault tolerance (pBFT) in leaderless asynchronous Directed Acyclic Graph (DAG).
We then present a general model of staking for asynchronous DAG-based distributed systems.
\end{abstract}

\keywords{Proof of Stake \and DAG \and Consensus algorithm \and StakeDag protocol \and  Byzantine fault tolerance \and Trustless System \and Validating power \and Staking model \and S-OPERA chain \and Layering \and Lamport timestamp \and Main chain \and Root \and Clotho \and Atropos \and Distributed Ledger}

\newpage
\pagenumbering{arabic} 
\tableofcontents 
\newpage

\section{Introduction}\label{ch:intro}

Trustless systems provide trust in the system regardless of the trust of its participants. In trustless systems, the participants' trusts are not assumed, but rather they may be honest or malicious. 
After the success over cryptocurrency, blockchains have emerged as a technology platform for secure decentralized transaction ledgers. They have been applied in numerous domains including financial, logistics as well as health care sectors. 
Blockchains provide immutability and transparency of blocks and are emerging as a promising solution for building trustless systems including distributed ledgers.

The concept of \emph{Byzantine} fault tolerance (BFT)~\cite{Lamport82} guaratees the reliability of a distributed database system when one-third of the participants may be compromised. Consensus algorithms~\cite{bcbook15} ensure the integrity of transactions over the distributed network~\cite{Lamport82} and is equivalent to the proof of BFT in distributed database systems~\cite{randomized03, paxos01}. 
Deterministic, completely asynchronous system does not guarantee Byzantine consensus with unbounded delays~\cite{flp}, but it is completely feasible for nondeterministic system. 
All nodes in practical Byzantine fault tolerance (pBFT) can reach a consensus for a block in the presence of a Byzantine node \cite{Castro99}. Consensus in pBFT is reached once a created block is shared with other participants and the share information is further shared with others \cite{zyzzyva07, honey16}.

Bitcoin and blockchain technologies have shown a phenomenal success that has enabled numerous opportunities for business and innovation. These protocols facilitate highly trustworthy, append-only, transparent public distributed ledgers. The underlying technologies have brought a huge promise to shape up the future of financial transactions, and potentially to redefine how people and companies compute, work and collaborate. 
Despite of the great success, the current blockchain-based systems are still facing some challenges, which attracted a lot of research in consensus algorithms \cite{bitcoin08, ppcoin12, dagcoin15}. 

\emph{Proof of Work} (PoW)~\cite{bitcoin08} is the most used model since introduced in the original Nakamoto consensus protocol in Bitcoin. 
Under PoW, validators are randomly selected based on the computation power they use. This process costs electricity that prevents attackers to change transaction records. PoW protocol requires exhausive computational work from participants for block generation, and also needs quite a long time for transaction confirmation. Since then, there have been an extensive amount of work to address these limitations.

Recent technological and innovative advances have led to new consensus algorithms \cite{ppcoin12, dpos14, dagcoin15, algorand17, sompolinsky2016spectre, PHANTOM08} to improve the consensus confirmation time and power consumption over blockchain-powered distributed ledgers.

\emph{Proof Of Stake} (PoS)~\cite{ppcoin12,dpos14} uses participants' stakes for generating blocks. \emph{Stake} is an amount of the cryptocurrency that a participant possesses and can prove it. Under PoS, the network achieves distributed consensus on a blockchain by randomly selecting the creator of the next block with a probability based on their stake. In PoS, validators vote on the authentic transactions based on their stake, an amount of tokens that they deposit into an account, which is frozen for a certain period of time. PoS brings environmental advantage with much less power consumption than PoW. If a participant is dishonest and compromised, its stake is voided and burnt. 
The penalty of losing its stake has proved an effective prevention of attack, since the stake loss will outweight the potential gain of an attack for any attacker. Thus, PoS is safer than proof-of-work (PoW), especially due to the scarcity of stakes in PoS compared to the easy-to-get computing power required in PoW.

The concept of a \emph{DAG} (directed acyclic graph) cryptocurrency was first introduced in 2015 in DagCoin paper~\cite{dagcoin15}. DAG technology becomes a promising alternative that allows cryptocurrencies to function similarly to those that utilize blockchain technology without the need for blocks and miners.  
DAG-based approaches have recently emerged as a promising alternative than the proof of work (PoW) for consensus. 
These approaches utilize directed acyclic graphs (DAG)~\cite{dagcoin15, sompolinsky2016spectre, PHANTOM08, PARSEC18, conflux18} to facilitate consensus. 
Examples of DAG-based consensus algorithms include Tangle~\cite{tangle17}, Byteball~\cite{byteball16}, and Hashgraph~\cite{hashgraph16}.

\subsection{Motivation}
Lachesis protocol~\cite{lachesis01} presents a general model of DAG-based consensus protocols.
The Lachesis consensus protocols  create a directed acyclic graph for distributed systems. 
We introduced a Lachesis consensus protocols\cite{lachesis01,fantom18}, which are DAG-based asynchronous non-deterministic to achieve pBFT. The protocols generate each block asynchronously and uses the OPERA chain (DAG) for faster consensus by confirming how many nodes share the blocks.
Recently, we introduced an ONLAY framework~\cite{onlay19} that achieves scalable, reliable consensus in a leaderless aBFT DAG. This framework uses the concepts of graph layering and hierarchical graphs on the DAGs. Then assigned layers are used to achieve deterministic topological ordering of finalized event blocks in an asynchronous leaderless DAG-based system.

There is only a few research work studying Proof-of-Stake in DAG-based consensus protocols; one example is ~\cite{hashgraph16}.
Hence, we are interested in investigating to use participants' stakes to improve DAG-based consensus protocol. We aim to see whether such a Proof of Stake model, which associates each participant with their stake or trust can guarantee a more reliable and robust consensus in trustless systems.

\subsection{\stakedag\ Protocols}

In this paper, we propose a new family $\mathfrak{S}$ of consensus protocols, denoted by the \stakedag\ protocols, that aims for PoS based consensus in a DAG-based trustless system. 
We introduce a general model of trustless system in which participants are distinguished by their stake or trust: users and validators. Users are participants with a default low score of trust and validators are high profile with a high trust score.

A synchronous approach in BFT systems broadcasts the voting and asks each node for a vote on the validity of each block. Instead, our \stakedag\ protocol aims for an asynchronous leaderless system. \stakedag\ protocol uses the concept of distributed common knowledge together with network broadcasting to reach a consistent global view with high probability from its local view.
Each node batches client transactions into a new event block and stores in its own DAG. The new event block is then shared with other nodes through asynchronous event transmission. Each node shares its own blocks as well as the ones it received from other nodes. This asynchronous step will spread all information through the network, and thus it can increase throughput near linearly as the number of nodes participating the network.

Our general model of stake-based consensus protocols $\mathfrak{S}$, which is based on the Lachesis protocol~\cite{lachesis01}. In $\mathfrak{S}$ protocol, each event block has one self-parent reference to the top event block of the same creator, and $k$-1 other-parent references to the top blocks of other nodes.
Specifically, $\mathfrak{S}$ protocol  leverages participants' stake as  validating power to achieve practical Byzantine fault tolerance (pBFT) in leaderless asynchronous Directed Acyclic Graph (DAG).
We then present a general model of staking for asynchronous DAG-based distributed systems.

\begin{figure}[ht]
	\centering
	\includegraphics[width=0.9\linewidth]{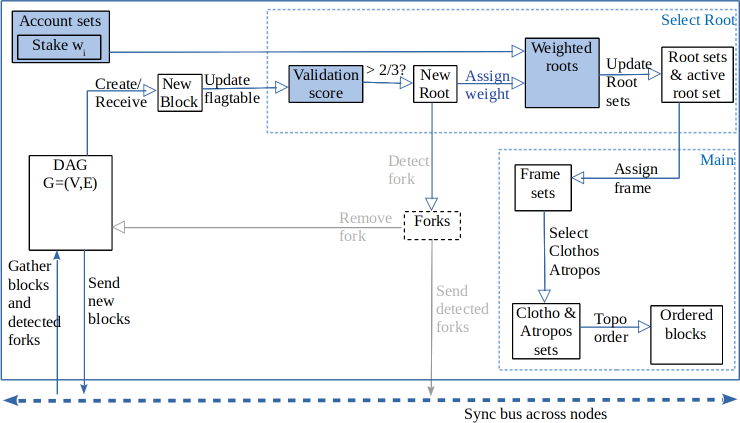}
	\caption{A General Framework of \stakedag\ Protocols}
	\label{fig:stakedagframework}
\end{figure}
Figure~\ref{fig:stakedagframework} shows a general framework of our \stakedag\ protocols. Each node contains a DAG consisting of event blocks. In each node, the information of accounts and their stakes are stored. The main steps in a PoS DAG-based consensus protocol include (a) block creation, (b) updating flagtable, (c) selecting roots and updating the root sets, (d) assigning frames, (e) selecting Clothos/Atropos, and (f) ordering the final blocks.
For a \stakedag\ protocol, the major steps are highlighted in blue. These steps are (1) updating flagtable; (2) computing validation score of a block, and (3) assigning weights to new roots. Remarkably, the Check of whether a block is a root in \stakedag\ protocol is different from that Check in Lachesis protocol: \stakedag\ protocol requires more than 2/3 of validating power (of total stake) while Lachesis requires more than 2/3 of the total number of nodes.

\begin{figure}[ht]
	\centering
	\subfloat[DAG with $k$ = 2]{\includegraphics[width=0.4\linewidth]{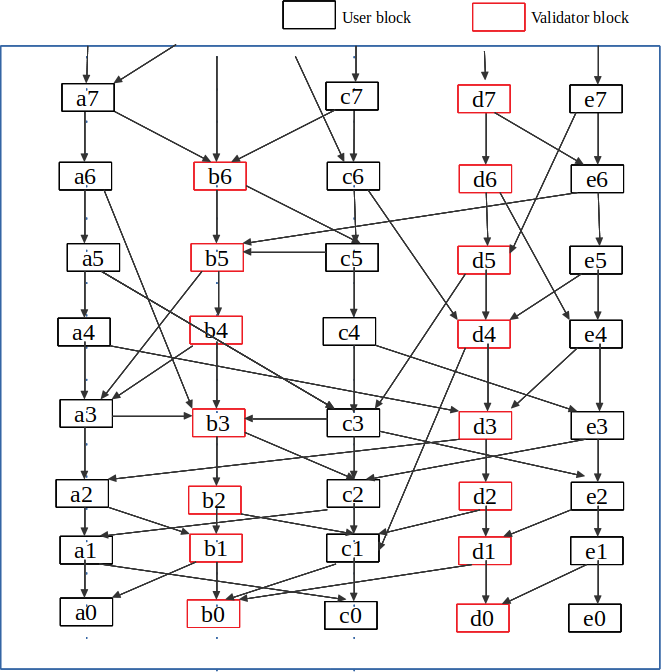}} \qquad
	\subfloat[DAG with $k$ = 3]{\includegraphics[width=0.4\linewidth]{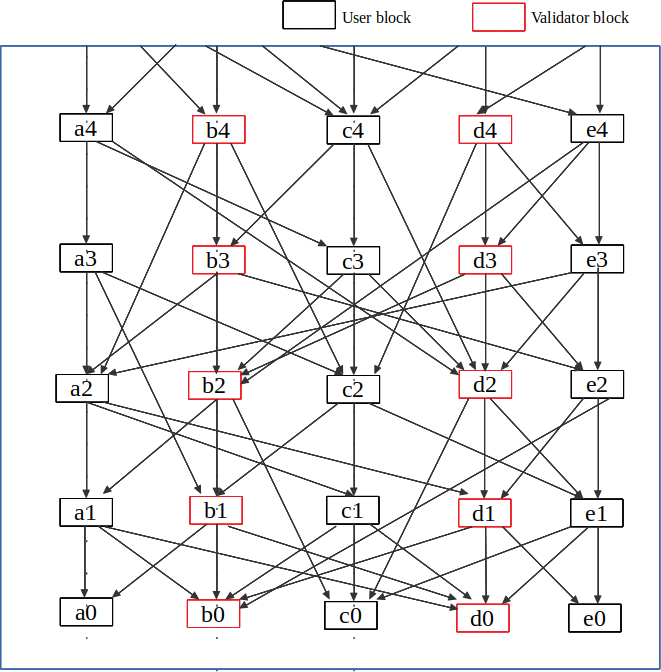}}	
	\caption{Examples of DAG with users and validators}
	\label{fig:stakedag-ex}
\end{figure}

Figure~\ref{fig:stakedag-ex} depicts an example of an S-OPERA chain, which is a weighted acyclic directed graph stored as the local view of each node.
There are five nodes in the example, three of which are normal users with validating power of 1, and the rest are validators whose validating power are set to 2. In this example, each event block has two references: self-parent and other-parent references. The event blocks created by the validators are highlighted in red.

\stakedag\ protocol leverages asynchronous event transmission for practical Byzantine fault tolerance (pBFT). 
The core idea of \stakedag\ is to create a leaderless, scalable, asynchronous DAG.
By computing asynchronous partially ordered sets with logical time ordering instead of blockchains, \stakedag\ offers a new practical alternative framework for distributed ledgers.

The main concepts of \stakedag\ are given as follows:
\begin{description}
\item[$-$ Event block] An immutable set of transactions created by a node and is then transported to other nodes.  Event block includes signature, timestamp, transaction records and referencing hashes to previous (parent) blocks.
\item [$-$ $\mathfrak{S}$ protocols] a family of \stakedag\ protocols.
\item [$-$ $S$ protocol] a specific protocol of the $\mathfrak{S}$ family, which sets the rules for event creation, communication and reaching consensus in \stakedag.
\item [$-$ Stake] This corresponds to the amount of tokens each node posesses in their deposit. This value decides the validating power a node can have.
\item [$-$ User node] A user node has a small amount stake (e.g., containing 1 token).
\item [$-$ Validator node] A validator node has large amount of stakes ($\geq$ 2 tokens).

\item [$-$ Validation score] Each event block has a validation score, which is the sum of the weights of the roots that are reachable from the block.

\item [$-$ S-OPERA chain] is the local view of the weighted Directed Acyclic Graph (DAG) held by each node. This local view is used to determine consensus.

\item [$-$ Root] An event block is called a \emph{root} if either (1) it is the first event block of a node, or (2) it can reach more than $2/3$ of the network's validating power from other roots. A \emph{root set} $R_s$ contains all the roots of a frame. A \emph{frame} $f$ is a natural number assigned to Root sets and its dependent event blocks.
\item [$-$ Root graph] Root graph contains roots as vertices and reachability between roots as edges.
\item [$-$ Clotho] A Clotho is a root at layer $i$ that is known by a root of a higher frame ($i$ + 1), and which in turns is known by another root in a higher frame ($i$ +2).
\item[$-$ Atropos] An Atropos is a Clotho that is assigned with a consensus time.
\item[$-$ Main chain] \stakedag's Main chain is a list of Atropos blocks and the the subgraphs reachable from those Atropos blocks.
\end{description}

We then introduce a specific $S_\phi$ protocol of the $\mathfrak{S}$ family. $S_\phi$ protocol uses the concept of layering, similar to its use in our ONLAY protocol~\cite{onlay19}.
$S$ protocol integrates online layering algorithms with stake-based validation to achieve practical Byzantine fault tolerance (pBFT) in leaderless DAG. The protocol achieves reliable, scalable consensus in asynchronous pBFT by using assigned layers and asynchronous partially ordered sets with logical time ordering instead of blockchains. The partial ordering produced by $S_\phi$ is flexible but consistent across the distributed system of nodes.  

We then present a formal model for the $S_\phi$ protocol. The formalization can be applied to abstract asynchronous Proof of Stake DAG-based distributed system. The formal model is built upon the model of current common knowledge (CCK)~\cite{cck92}. 

\subsection{\stakedag\ Staking Model}
In this paper, we present a staking model, which can be applied to a general model of stake-based PoS consensus protocols. In particular, our staking model can be integrated nicely with our \stakedag\ protocols.

We introduce the mechanics, mathematical reasoning and formulae for calculating various aspects of \stakedag\ system.
We then present our system design with multiple incentive mechanisms to achieve high throughput, scalability, security and decentralisation.

\subsection{Contributions}

In summary, this paper makes the following contributions:
\begin{itemize}
	\item We propose a new scalable framework, so-called \stakedag, aiming for practical more secure DAG-based trustless systems. 
	\item We present a family $\mathfrak{S}$ of PoS DAG-based consensus protocols to achieve more reliable consensus in asynchronous leaderless trustless systems.
	\item We present a staking model that can be applied to any PoS DAG-based protocols.
	\item We introduce a novel consensus protocol $S_\phi$, which uses layer algorithm and root graphs, for faster root selection. $S_\phi$ protocol uses layer assignment on the DAG to achieve quick consensus with a more reliable ordering of final event blocks.
	\item A formal model and proof of BFT of our \stakedag\ protocol are defined using CCK model~\cite{cck92}. We formalize our proofs into any generic aBFT Proof of Stake DAG system. \stakedag\ protocol achieves global consistent view via layer assignment with probability one in pBFT condition.
\end{itemize}

\subsection{Paper structure}

The rest of this paper is organised as follows.
Section~\ref{se:related} gives the related work. Section~\ref{se:model} presents our general model of Proof of Stake DAG-based consensus protocols.
Section~\ref{se:staking} 
introduces a staking model that is used for our \stakedag\ consensus protocol. 
Section~\ref{se:Sprotocol} describes a specific protocol, namely $S$ protocol, that uses layering algorithms to achieve a more reliable and scalable solution to the consensus problem in BFT systems.
Section~\ref{se:discuss} discusses about several important aspects of Proof of Stake DAG-based protocols, such as fairness and security.
Section~\ref{se:con} concludes.
Further details about our \stakedag\ framework and the $S_\phi$ protocol are given in the Appendix. It covers details about background plus Proof of Byzantine fault tolerance of the protocols.

\section{Related work}\label{se:related}

\subsection{An overview of Blockchains}

A blockchain is a type of distributed ledger technology (DLT) to build record-keeping system in which its users possess a copy of the ledger. The system stores transactions into blocks that are linked together. The blocks and the resulting chain are immutable and therefore serving as a proof of existence of a transaction. 
In recent years, the blockchain technologies have seen a widespread interest, with applications across many sectors such as
finance, energy, public services and sharing platforms. 

For a public or permissionless blockchain, it has no central authority. Instead, consensus among users is paramount to guarantee the security and the sustainability of the system. For private, permissioned or consortium blockchains, one entity or a group of entities can control who sees, writes and modifies the data on it. 
We are interested in the decentralisation of BCT and hence only public blockchains are considered in this section. 

In order to reach a global consensus on the blockchain, users must follow the rules set by the consensus protocol of the system. 

\textbf{Proof of Work} 
Bitcoin was the first and most used BCT application. Proof of Work (PoW) is the consensus protocol introduced by the Bitcoin in 2008~\cite{bitcoin08}.  
In PoW protocol, it relies on user's computational power to solve a cryptographic puzzle that creates consensus and ensures the integrity of data stored in the chain. Nodes validate transactions in blocks (i.e. verify if sender has sufficient funds and is not double-spending) and competes with each other to solve the puzzle set by the protocol. The incentive for miners to join this mining process is two-fold: the first miner, who finds a solution, is rewarded (block reward) and gains all the transaction fees associated to the transactions.

The key component in Bitcoin protocol and its successors is the PoW puzzle solving. The miner that finds it first can issue the next block and her work is rewarded in cryptocurrency.
PoW comes together with an enormous energy demand. 

From an abstract view, there are two properties of PoW blockchain:
\begin{itemize}
\item Randomized leader election: The puzzle contest winner is elected to be the leader and hence has the right to issue the next block. The more computational power a miner has, the more likely it can be elected.
\item Incentive structure: that keeps miners behaving honestly and extending the blockchain. In Bitcoin this is achieved by miners getting a block discovery reward and transaction fees from users. In contrast, subverting the protocol would reduce the trust of the currency and would lead to price loss. Hence, a miner would end up undermining the currency that she itself collects.
\end{itemize}

Based on those two charateristics, several  alternatives to Proof-of-Work have been proposed.

\subsection{Proof of Stake}

\emph{Proof of Stake}(PoS) is an alternative to PoW for blockchains. Instead of using puzzle solving, PoS is more of a lottery system.  Each node has a certain amount of stake in a blockchain. Stake can be the amount of currency, or the age of the coin that a miner holds.
In PoS, the leader or block submitter is randomly elected with a probability proportional to the amount of stake it owns in the system. For each block, a randomly elected participant can issue the next block. The more stake a party has, the more likely it can be elected as a leader. Similarly to PoW, block issuing is rewarded with transaction fees from participants whose data are included.

There are two major types of PoS. The first type is chain-based PoS~\cite{pass2017fruitchains}, which uses chain of blocks like in PoW, but stakeholders are randomly selected based on their stake to create new blocks. This includes Peercoin~\cite{king2012ppcoin}, Blackcoin~\cite{vasin2014blackcoin}, and Iddo Bentov’s work~\cite{bentov2016}, just to name a few. The second type is BFT-based PoS that is based on BFT consensus algorithms such as pBFT~\cite{Castro99}. Proof of stake using BFT was first introduced by Tendermint~\cite{kwon2014tendermint}, and has attracted more research~\cite{algorand16}. 
Ethereum had a plan to move from a PoW to a PoS blockchain~\cite{buterin2018}. 

PoS has a clear benefit (over PoW) since
any node can join the network with even on cheap hardware such as a \$35 Raspberry Pi.
Every validator can submit blocks and the likelihood of acceptance is proportional to the \% of network weight (i.e., total amount of tokens being staked) they possess. Thus, to secure the blockchain, nodes need the actual native token of that blockchain. To acquire the native tokens, one has to purchase or earn them with staking rewards. Generally, gaining 51\% of a network’s stake is much harder than renting computation.

{\bf Security} Although PoS approach reduces the energy demand, new issues arise that were not present in PoW-based blockchains. These issues are shown as follows:
\begin{itemize}
	\item {\it Grinding attack:} Malicious nodes can play their bias in the election process to gain more rewards or to double spend their money.
	\item {\it Nothing at stake attack:} In PoS,  constructing alternative chains becomes easier. A node in PoS seemingly does not lose anything by also mining on an alternative chain, whereas it would lose CPU time if working on an alternative chain in PoW.
\end{itemize}

{\bf Delegated Proof of Stake}
To tackle the above issues in PoS, 
\emph{Delegated Proof of Stake} (DPoS) consensus protocols are introduced, such as Lisk, EOS~\cite{EOS}, Steem~\cite{Steem}, BitShares~\cite{bitshares} and Ark~\cite{Ark}. 
DPoS uses voting to reach consensus among nodes more efficiently by speeding up transactions and block creation. 
Users have different roles and have a incentive to behave honestly in their role.



In DPoS systems, users can \emph{vote} to select \emph{witnesses}, to whom they trust, to validate transactions. For top tier witnesses that have earned most of the votes, they earn the right to validate transactions. Further, users can also \emph{delegate} their voting power to other users, whom they trust, to vote for witnesses on their behalf. 
In DPoS, votes are weighted based on the stake of each voter. A user with a small stake can become a top tier witness, if it receives votes from users with large stakes.


Top witnesses are responsible for validating transactions and creating blocks, and then get fee rewards. Witnesses in the top tier can exclude certain transactions into the next block. But they cannot change the details of any transaction. There are a limited number of witnesses in the system.
A user can replace a top tier witness if s/he gets more votes or is more trusted. Users can also vote to remove a top tier witness who has lost their trust. Thus, the potential loss of income and reputation is the main incentive against malicious behavior in DPoS.

Users in DPoS systems also vote for a group of \emph{delegates}, who are trusted parties responsible for maintaining the network. The delegates are in charge of the governance and performance of the entire blockchain protocol. But the delegates cannot do transaction validation and block generation.
For example, they can propose to change block size, or the reward a witness can earn from validating a block. 
The proposed changes will be voted by the system's users.

DPoS brings various benefits:
(1) faster than traditional PoW and PoS Stake systems; 
(2) enhance security and integrity of the blockchains as each user has an incentive to perform their role honestly.
(3) normal hardware is sufficient to join the network and become a user, witness, or delegate.
(4) more energy efficient than PoW.

{\bf Leasing Proof Of Stake} Another type of widely known PoS is Leasing Proof Of Stake (LPoS). Like DPoS, LPoS allows users to vote for a delegate that will maintain the integrity of the system. Further, users in a LPoS system can lease out their coins and share the rewards gained by validating a block.

There are a number of surveys that give comprehensive details of PoW and PoS, such as~\cite{sheikh2018proof, panarello2018survey}. There are other successors of PoS, such as \emph{Proof of Authority} (PoA). In PoA, the reputation of the validator acts as the stake~\cite{ProofofAuth}. PoA system is a permissioned system, which is controlled by a set of validators. Higher throughput can be achieved by reducing the number of messages sent between the validators. Reputation is difficult to regain once lost and thus is a better choice for ``stake''.

\subsection{DAG-based approaches}

DAG-based approaches have currently emerged as a promising alternative to the PoW and PoS blockchains.
The notion of a \emph{DAG} (directed acyclic graph) was first coined in 2015 by DagCoin~\cite{dagcoin15}. Since then, DAG technology has been adopted in numerous systems, for example, ~\cite{dagcoin15, sompolinsky2016spectre, PHANTOM08, PARSEC18, conflux18}. Unlike a blockchain, DAG-based system facilitate consensus while achieving horizontal scalability. This section will present the popular DAG-based approaches.

Tangle is a DAG-based approach proposed by IOTA~\cite{tangle17}. 
Tangle uses PoW to defend against sybil and spam attacks. Good actors need to spend a considerable amount of computational power, but a bad actor has to spend increasing amounts of power for diminishing returns. Tips based on transaction weights are used to address the double spending and parasite attack. 

Byteball~\cite{byteball16} introduces an internal pay system called Bytes used in distributed database. Each storage unit is linked to previous earlier storage units. The consensus ordering is computed from a single main chain consisting of roots. Double spends are detected by a majority of roots in the chain.

Hashgraph~\cite{hashgraph16} introduces an asynchronous DAG-based approach in which each block is connected with its own ancestor. Nodes randomly communicate with each other about their known events. Famous blocks are computed by using \textit{see} and \text{strong see} relationship at each round. Block consensus is achieved with the quorum of more than 2/3 of the nodes.

RaiBlocks~\cite{raiblock17} was proposed to improve high fees and slow transaction processing. Consensus is obtained through the balance weighted vote on conflicting transactions. Each participating node manages its local data history. Block generation is carried similarly as the anti-spam tool of PoW. The protocol requires verification of the entire history of transactions when a new block is added.

Phantom~\cite{PHANTOM08} is a PoW based permissionless protocol that generalizes Nakamoto’s blockchain to a DAG. A parameter $k$ is used to adjust the tolerance level of the protocol to blocks that were created concurrently. The adjustment can accommodate higher throughput; thus avoids the security-scalability tradeoff as in Satoshi’s protocol. A greedy algorithm is used on the DAG to distinguish between blocks by honest nodes and the others. It allows a robust total order of the blocks that is eventually agreed upon by all honest nodes.

Like PHANTOM, the GHOSTDAG protocol selects a $k$-cluster, which induces a colouring of the blocks as Blues (blocks in the selected cluster) and Reds (blocks outside the cluster). GHOSTDAG finds a cluster using a greedy algorithm, rather than looking for the largest $k$-cluster.

Spectre~\cite{sompolinsky2016spectre} uses DAG in a PoW-based protocol to tolerate from attacks with up to 50\% of the computational power. The protocol gives a high throughput and fast confirmation time. Sprectre protocol satisfies weaker properties in which the order between any two transactions can be decided from the transactions by honest users; whilst conventionally the order must be decided by all non-corrupt nodes. 

Conflux~\cite{conflux18} is a DAG-based Nakamoto consensus protocol. It optimistically processes concurrent blocks without discarding any forks. The  protocol achieves consensus on a total order of the blocks, which is decided by all participants. Conflux can tolerate up to half of the network as malicious while the BFT-based approaches can only tolerate up to one third of malicious nodes.

Parsec~\cite{PARSEC18} proposes a consensus algorithm in a randomly synchronous BFT network. It has no leaders, no round robin, no PoW and reaches eventual consensus with probability one. Parsec can reach consensus quicker than Hashgraph~\cite{hashgraph16}. The algorithm reaches 1/3-BFT consensus with very weak synchrony assumptions. Messages are delivered with random delays, with a finite delay in average.

Blockmania~\cite{Blockmania18}
achieves consensus with several advantages over the traditional pBFT protocol. In Blockmania, nodes in a quorum only emit blocks linking to other blocks, irrespective of the consensus state machine. The resulting DAG of blocks is used to ensure consensus safety, finality and liveliness.
It reduces the communication complexity to $O(N^2)$ even in the worse case, as compared to pBFT's complexity of $O(N^4)$.

In this paper, our \stakedag\ protocol is different from the previous work. We propose a general model of DAG-based consensus protocols, which uses Proof of Stake for asynchronous permissionless BFT systems. \stakedag\ protocol is based on our previous DAG-based protocols\cite{lachesis01,fantom18} to achieve asynchronous non-deterministic pBFT. The new $S_\phi$ protocol, which is based on our ONLAY framework~\cite{onlay19}, uses graph layering to achieve scalable, reliable consensus in a leaderless aBFT DAG.

\newpage

\section{A General Model of \stakedag\ Protocols}\label{se:model}

In this section, we present a general model of \stakedag\ protocols that are Proof of Stake DAG-based for trustless systems.
In this general model of a stake-based consensus protocol, the stakes of the participants are paramount to a trust indicator.

A stake-based consensus protocol consists of nodes that vary in their amount of stakes. Presumingly,
any participant can join the network. Participants can increase their impact as well as their contributions over the Fantom network by the amount of stakes or FTM tokens they possess.
For those who obtain more FTM tokens, they have more validating power to check for the validity of blocks and the underlying DAG.

We then introduce the key concepts of our \stakedag\ protocol as follows.

\subsection{Stake}
Each participant node of the system has an account.
The \emph{stake} of a participant is defined based on their account balance.  The account balance is the number of tokens that was purchased, accumulated and/or delegated from other account.
Each participant has an amount of stakes $w_i$, which is a non-negative integer.

Each participant with a positive balance can join as part of the system. A user has a stake of 1, whereas a validator has a stake of $w_i > 1$. 
The number of stakes is the number of tokens that they can prove that they possess. 
In our model, a participant, whose has a zero balance, cannot participate in the network, either for block creation nor block validation.

\subsection{Validating Power and Validator Types}

In our previous papers~\cite{lachesis01,fantom18},  a family of Lachesis protocols was proposed. In Lachesis, every node can submit new transactions, which are batched in a new event block, and it can communicate its new (own) event blocks with peers. Blocks are then validated by all nodes, all of which have the same validating power.

Here, in this paper, we present a general model that distinguishes \stakedag\ participants by their validating power.
In this model, every node $n_i$ in $\mathfrak{S}$ protocols has a stake value of $w_i$. This value is then used to determine their trust or validating power in validation of event blocks.

Every node in $\mathfrak{S}$ protocols can create new event blocks and can communicate them with all other nodes. All the nodes can validate the event blocks. Their validating power is determined by their stake $w_i$.

Ideally, we could allow participants can do validation only without the need to participate into block creation, if they wish. However, for an asynchronous DAG-based consensus, validators-only in \stakedag\ protocol still need to create event block to indicate that which blocks (and all of its ancestors) have been validated.

There are two general types of validators that we are considered in this paper. They are described as follows.
\begin{itemize}
	\item{\bf{Validators as power nodes:} } {Every node, either user or validator, can create and validate event blocks. 
		Both users and validators can submit new transactions via their new event blocks.
		
		The only difference between users and validators of this type is that a validator has more validating power of, say $w_i > 1$, while a user has a validating power of 1.}
	\item{\bf{Validators as validation-only nodes:} } {We now consider another type of validators. Users and validators of this type play more specific roles. 
		A user node can create new event blocks, which contain transactions. User can validate any event blocks.
		
		In this type, validators are validation-only nodes. Validators can create \emph{empty} event blocks, but the event blocks created by validator-only nodes contain no transactions. Validators can validate event blocks.
		}
\end{itemize}

For stake-based consensus, the difference between the two validator types does not have any significant impact on consensus. The two types are different with respect to the responsibility (validation) and the economic expectation of a participant who joins the \stakedag\ protocol.
In particular, the first type allows any validator to include new transactions in the new event blocks it creates, while the second type prohibits validators from adding new transactions.

Remarkably, the resulting OPERA chains on the nodes are the same for two types of validators. Thus, the computation of consensus does not change between these two types. The two models of validators are considered as implementation-specific.
In fact, the first type is more general than the second type.

For the sake of simplicity, we distinguish two types of nodes: users and validators.  We consider only the first type of validators, throughout the rest of the paper.

\subsection{Event Block Creation}

In order to create a new event block, a node can choose a top event block(s) of another node as a parent. In \stakedag\ protocol, users and validators can choose a top event block(s) of another  node/validator node to as an other-parent reference to the new event block.
Prior to creating a new event block, the node will first validate its current block and the selected top event block(s) from other node(s).

In a \stakedag\ protocol, each event block has $k$ references to other event blocks using their hash values. The references of a new event block satisfy the following conditions:

\begin{enumerate}
	\item Each of the $k$ referenced event blocks is the top event block of its own node.
	\item One of the references must be a self-ref (or self-parent)  that references to an event block of the same node. 
	\item The other $k$-1 references, which are also known as other-parent or other-ref references, refer to $k$-1 top event blocks on other nodes.
\end{enumerate}

\subsection{S-OPERA chain: A Weighted DAG}

In \stakedag\ protocol, a (participant) node is a server (machine) of the distributed system.
Each node can create messages, send messages to and receive messages from other nodes. The communication between nodes is asynchronous.

\begin{figure}[ht]
	\centering
	\includegraphics[width=0.4\linewidth]{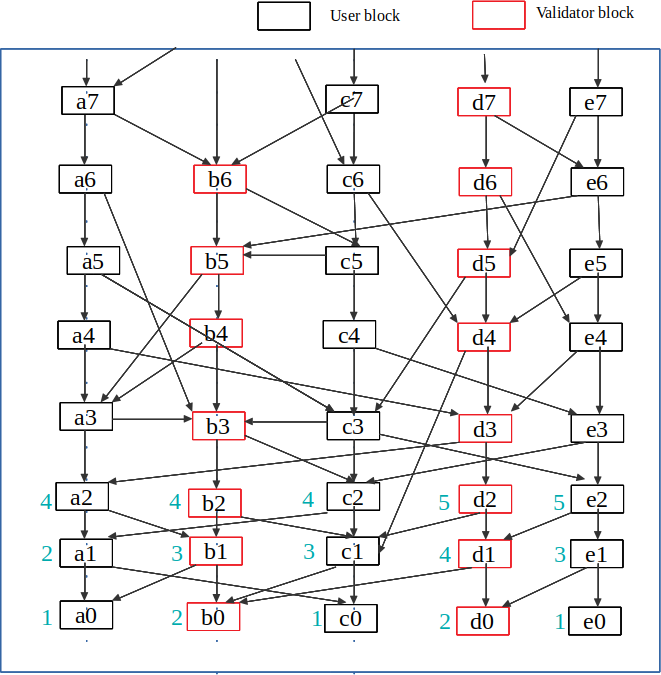}
	\caption{An example of an S-OPERA chain in \stakedag. The validation scores of some selected blocks are shown.}
	\label{fig:stakedag-validatorscore}
\end{figure}
Figure \ref{fig:stakedag-validatorscore}
depicts an S-OPERA chain obtained from the DAG.
In this example, validators have validating power of 2, while users have validating power of 1. Blocks created by validators are highlighted in red color. First few event blocks are marked with their validating power. Leaf event blocks are special as each of them has a validation score equal to their creator's validating power.

The core idea of \stakedag\ protocol is to use a DAG-based structure, namely \emph{S-OPERA chain}, which is based on the concept of OPERA chain in our Lachesis protocol~\cite{lachesis01}. 
S-OPERA chain is a weighted directed acyclic graph $G$=($V$,$E$), where $V$ is the set of event blocks, $E$ is the set of edges between the event blocks. Each vertex (or event block) is associated with a \emph{validation score}. Each event block also has a validating score, which is the total weights of the roots reachable from it. 
When an block becomes a root, it is assigned a \emph{weight}, which is the same with the validating power of the creator node.

Let $\mathcal{W}$ be the total validating power of all nodes. 
For consensus, the algorithm examines whether an event block has a validation score of at least $2\mathcal{W}/3$. A validation score of $2\mathcal{W}/3$ means the event block has been validated by more than two-thirds of total validating power in the S-OPERA chain. 

\subsection{Stake-based Validation}
In \stakedag, we use the following validation process. 
A node, when validating blocks, never assumes the honesty or dishonesty of any block creator. Instead it verifies the blocks just like other consensus engines operating in an untrusted environment.
A node must validate its current block and the received ones before it attempts to create or add a new event block into its local DAG. A node must validate its (own) new event block before it communicates the new block to other nodes.

Note that, \stakedag\ consensus protocols has a more general model than that of the Lachesis protocols~\cite{lachesis01}. All nodes in Lachesis can be considered as 'user' nodes in \stakedag, since the validating power is 1 by default.

A root is an important block that is used to compute the final consensus of the event blocks of the DAG.
In Lachesis, when a block can reach more than 2/3 of the roots of the network, it becomes a root.  Unlike in Lachesis protocol, \stakedag\ procol takes the stakes of participants into account to compute the consensus of blocks. A stake number of a block is the sum of the validating power of the nodes whose roots can be reached from the block. When a block receives more than 2/3 of the entire validating power of the network, it becomes a root.

\subsection{Validation Score}

When an event block can reach a root, it takes the validating power of the root's creator. The validation score of a block is the sum of accummulated validating powers that the block has gained.
The validation score of a block $v_i$ $\in$ $G$ is denoted by $s(v_i)$.

The \emph{validation score} of a block is used to determine whether the block is a new root. If the block's score is greater than 2/3 of the total validating power, the block becomes a root.
The weight of a root $r_i$ is denoted by $w(r_i)$, which is the weight $w_j$ of the creator node $j$ of the $r_i$.

\subsection{Flagtable Calculation}
To quickly compute the validation score of event blocks, we introduce stake-based flagtable. Each flagtable is a mapping of a root and the stake associated with the creator of that root.

Each event block has a flagtable that stores a set of the roots that are reachable from the block. For flagtable, we only consider the roots in the current active root set.

The validation score of a block is computed as the sum of all validating powers of all the roots contained in the block's flagtable.

\begin{figure}[ht]
	\centering
	\includegraphics[width=0.9\linewidth]{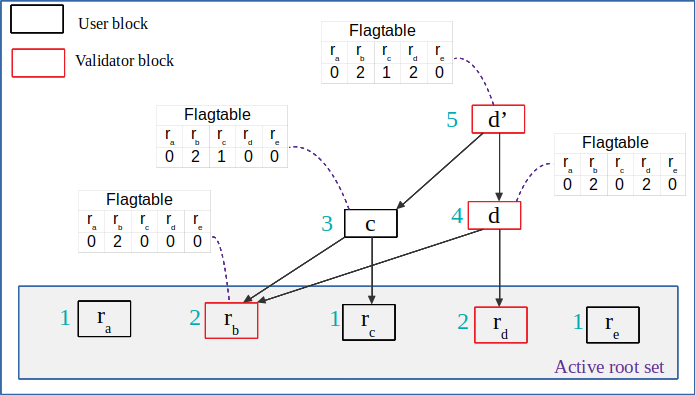}
	\caption{An Example of Flagtable}
	\label{fig:flagtable}
\end{figure}

Figure~\ref{fig:flagtable} shows an example of flagtable calculation for event blocks. The current active root set contains five roots $r_a$, $r_b$, $r_c$, $r_d$ and $r_e$ with their validating powers of 1, 2, 1, 2, 1, respectively. The flagtable of block $r_b$ contains a single mapping entry ($r_b$, 2), and thus the validation score of $r_b$ is the same with the validating power of node $b$, which is 2. The validating power of block $c$ is the sum of 2 and 1, which gives a result of 3. Similarly, the validation scores of blocks $d$ and $d'$ are 4 and 5, respectively.

\subsection{Root Selection}
Figure~\ref{fig:rootselection} depicts the process of selecting a new root in \stakedag. A node can create a new event block or receive a new block from other nodes. When a new event block is created, its flagtable is updated. Like in Lachesis protocol, each block has a flagtable that contains the sets of roots that are reachable from the block.

\begin{figure}[ht]
	\centering
	\includegraphics[width=\linewidth]{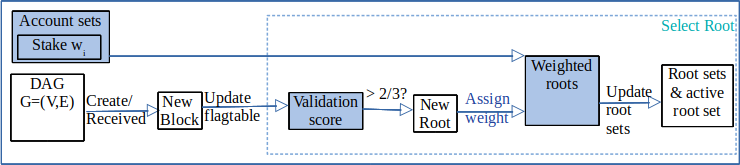}
	\caption{The steps for root selection in \stakedag}
	\label{fig:rootselection}
\end{figure}

Root events get consensus when 2/3 validating power is reached.
When a new root is found, 'Assign weight' step will assign a weight to the root.
The root's weight is set to be equal to  the validating power $w_i$ of its creator.
The validation score of a root is unchanged.

\subsection{Main-chain}

For faster consensus, we introduce the \emph{Main-chain}, which is a special sub-graph of the S-OPERA chain. 
The Main chain is an append-only list of blocks, that caches the final consensus ordering of the finalized Atropos blocks.
The local hashing chain is useful to improve path search to quickly determine the closest root to an event block. Root event blocks are important blocks that reach $2/3$ of the network power.  
After the topological ordering is computed over all event block, Atropos blocks are determined and form the Main chain. 

Each participant has an own copy of the Main chain and can search consensus position of its own event blocks from the nearest Atropos.
The chain provides quick access to the previous transaction history to efficiently process new coming event blocks. 
With the Main chain, unknown participants or attackers can be easily detected.

\newpage
\section{\stakedag\ Staking model}\label{se:staking}

This section presents our staking model for PoS DAG-based \stakedag\ protocols. We first give general definitions and variables of our staking model, and then present the model in details.

\subsection{Definitions and Variables}

Below are the general definitions and variables that are important in \stakedag\ protocol.

\textbf{General definitions}

\begin{longtable}{p{1cm} p{13cm}}
		$\mathcal{F}$ & denotes the network itself \\[4pt]
		$N$  &    the set of nodes in the network \\[4pt]
		$SPV$          &"Special Purpose Vehicle" - a special smart contract acting as an internal market-maker for FTG tokens, managing the collection of transaction fees and the payment of all rewards \\[4pt]
		$FTM$          & main network token \\[4pt]
		$FTG$          & network transaction token (gas) \\[4pt]
		$E$            & set of event blocks in $\mathcal{F}$ \\[4pt]
		$E(d)$         & set of all event blocks validated on day $d$, their number being $|E(d)|$
\end{longtable}

\textbf{Accounts}

\begin{longtable}{p{1cm} p{1.5cm} p{11cm}}
		$\mathbb{U}$  & & set of all participant accounts in the network \\[4pt]
		$\mathbb{A}$  & $\subset \mathbb{U}$ & accounts with a positive FTM token balance \\[4pt]
		$\mathbb{A}_G$  & $\subset \mathbb{U}$ & accounts with a positive FTG token balance \\[4pt]
		$\mathbb{S}$  & $\subseteq \mathbb{A}$ & accounts that have staked for validation (some of which may not actually be validators) \\[4pt]
		$\mathbb{V}$  & $\subseteq \mathbb{S}$ & validating accounts, corresponding to the set of the network's validating nodes
\end{longtable}

A participant with an account having a positive FTM token balance, say $i \in \mathbb{A}$, can join the network.
But an account $i$ in $\mathbb{S}$ may not participate in the protocol yet. Those who join the network belong to the set $\mathbb{V}$. 

\dfnn{Validating account}{A validating account is an account with a positive FTM token balance}.

Note that, the set of validating accounts $\mathbb{V}$ contain both users and validators, as presented in our general model in Section~\ref{se:model}.

\textbf{Network parameters subject to on-chain governance decisions}

\begin{longtable}{p{1cm} p{1.5cm} p{11cm}}
	$F$  & $3.175e9$ & total supply of FTM tokens \\[4pt]
	$\delta$  & $30$ & period in days for determining Proof of Importance \\[4pt]
	$\lambda$ & $90$ & period in days after which validator staking must be renewed, to ensure activity \\[4pt]
	$\varepsilon$ & 1 & minimum number of tokens that can be staked by an account for any purpose \\[4pt]
	$\theta$  & 30\% & impact of Proof of Importance for rewards \\[4pt]
	$\xi$  & 50\% & impact of Proof of Importance for transaction staking \\[4pt]
	$\phi$    & 30\% & SPV commission on transaction fees \\[4pt]
	$\mu$     & 15\% & validator commission on delegated tokens
\end{longtable}

\textbf{Tokens held and staked}

Unless otherwise specified, any mention of \textit{tokens} refers to FTM tokens.

\dfnn{Token helding}{The token helding  $t_i$ of an account is number of FTM tokens held by account $i \in \mathbb{A}$.}

\begin{longtable}{p{1cm} p{1.5cm} p{11cm}}
	$t_i$        &                 & number of FTM tokens held by account $i \in \mathbb{A}$ \\[4pt]
	$t_i^{[x]}$      & $> \varepsilon$ & transaction-staked tokens by account $i$\\[4pt]
	$t_i^{[d]}(s)$   & $> \varepsilon$ & tokens delegated by account $i$ to account $s \in \mathbb{S}$ \\[4pt]
	$t^{[d]}(s)$     &                 & total of tokens delegated to account $s \in \mathbb{S}$ \\[4pt]
	$t_i^{[d]}$      &                 & total of tokens delegated by account $i$ to accounts in $\mathbb{S}$ \\[4pt]
	$t_i^{[s]}$      &                 & validation-staked tokens by account $i$ \\[4pt]
	$T^{[s]}_{min}$  & 0.1\%           & minimum tokens staked by $v \in \mathbb{V}$, as a percentage of $F$  \\[4pt]
	$T^{[s]}_{max}$  & 0.4\%           & maximum tokens staked by $v \in \mathbb{V}$, as a percentage of $F$ \\[4pt]
	$M$          & 15              & delegation multiplier - maximum ratio of delegated versus staked tokens\\[4pt]
	$Q$          &              & multiplier - maximum fraction of staked or delegated tokens over the holding
\end{longtable}

The sum of tokens staked or delegated by an account $i \in \mathbb{A}$ cannot exceed the amount of tokens held:
\begin{equation}
t_i^{[x]} + t_i^{[s]} + \sum_{s \in \mathbb{S}} t_i^{[d]}(s)  \leq t_i
\end{equation}

The following limits apply for token staked for validation by $s \in \mathbb{S}$:
\begin{equation*}
T^{[s]}_{min} \cdot F \leq t_s^{[s]} \leq T^{[s]}_{max} \cdot F
\end{equation*}

The total amount of tokens delegated to an account $s \in \mathbb{S}$ is:
\begin{equation}
t^{[d]}(s) = \sum_{i \in \mathbb{A}} t_i^{[d]}(s)
\end{equation}

The total amount of tokens delegated by an account $i \in \mathbb{A}$ is:
\begin{equation}
t_i^{[d]} = \sum_{s \in \mathbb{S}} t_i^{[d]}(s)
\end{equation}

The sum of tokens delegated to an account $s \in \mathbb{S}$ cannot exceed a fixed multiple of tokens staked by that account:

\begin{equation*}
t^{[d]}(s) \leq M \cdot t_s^{[s]}
\end{equation*}

Finally, there is a limit on the amount of tokens that can be transaction-staked or delegated:

\begin{equation*}
t_i^{[d]} \leq Q \cdot t_i \qquad \quad t_i^{[x]} \leq Q \cdot t_i
\end{equation*}

\subsection{Absolute Weight Model}

\dfnn{Weight in $\mathcal{F}$}{
The weight of an account $i \in \mathbb{A}$ is equal to its token holding $t_i$.}

\dfnn{Importance in $\mathcal{F}$}{
The importance is proportional to gas use in the overall network over the most recent period of $\delta$ days.}

The importance of a node $i$ is computed by:
\begin{equation}
\hat{g}_i = \frac{g_i}{\mathcal{G}} \mathcal{P},
\end{equation}
where 
\begin{longtable}{p{1cm} p{13cm}}
	$g_i$          &  gas used by account $i\in \mathbb{U}$ during past $\delta$ days \\
	$\mathcal{G}$  &  gas used in the entire network in past $\delta$ days. So $\mathcal{G} = \sum_{i \in \mathbb{U}} g_i$.\\
	$\hat{g}_i$    &  importance of $i \in \mathbb{U}$, rebased to be comparable with $t_i$ \\
	$\mathcal{P}$ & 	the sum of all the network's transacting power \\
\end{longtable}

\dfnn{Transacting Power}{
The transacting power of an account is defined as a weighted average of an account's weight and importance in $\mathcal{F}$.}

\begin{longtable}{p{1cm} p{13cm}}
	$x_i$  &  transacting power of account $i \in \mathbb{U}$ \\[4pt]
	$X$    &  total transacting power of $\mathbb{U}$. Thus, $X = \sum_{i \in \mathbb{U}} x_i = \mathcal{P}$.
\end{longtable}

Transacting power of an account is computed by:
\begin{equation}
x_i = \xi \hat{g}_i + (1 - \xi) t_i
\end{equation}

Note that, the above model of transacting power includes a gas factor used in smart contracts. Without gas factor, the transacting power of an account $i \in \mathbb{U}$ is simply given by $x_i = t_i$, which is the token helding of the account.

\textbf{Network Performance and Transaction Slots}

\begin{longtable}{p{1cm} p{2.2cm} p{11cm}}
		$\Pi$            & 5,000,000,000   & Estimated maximum network processing power, in FTG per second  \\[4pt]
		$\Theta$         & 500,000 & Estimated maximum network throughput, in Bytes per second  \\[4pt]
		$\sigma^g$       &     & transacting power needed for a slot of 1 FTG per second \\[4pt]
		$\sigma^b$       &     & transacting power needed for a slot of 1 Byte/second of throughput
\end{longtable}

Given that at most $F$ tokens can be transaction-staked, we have:
\begin{equation}
\sigma^g = \frac{F}{\Pi} \approx 0.635 \qquad \qquad
\sigma^b = \frac{F}{\Theta} \approx 6350
\end{equation}

\subsection{Relative Weight Model}

Since only a fraction of tokens are staked or delegated for validation, we define a \emph{relative weight} in order to correctly determine the total validating power of the entire network $\mathcal{F}$.

\dfnn{Weight in $\mathbb{V}$}{
A relative weight of a validator $v$ is a relative value of the validating power of $v$ computed from the total validating power of the entire network $\mathcal{F}$.}

\begin{longtable}{p{1cm} p{14cm}}
	$w_v$ & tokens staked by, and delegated, to validator $v \in \mathbb{V}$, which represents the weight of this validator \\[4pt]
	$\mathcal{W}$  &  total tokens staked by, and delegated to, validators. That is, $\mathcal{W} = \sum_{v \in \mathbb{V}} w_v$.
\end{longtable}

The relative weight of $v$ is computed by:
\begin{equation}
\quad  w_v = t_v^{[s]} + \sum_{i \in \mathbb{A}} t_i^{[d]}(v) = t_v^{[s]} + t_i(v)
\end{equation}

\dfnn{Importance in $\mathbb{V}$}{
The importance in $\mathbb{V}$ is proportional to gas use by accounts that have staked or delegated tokens.}

\begin{equation}
\hat{h}_v = \frac{h_v}{\mathcal{H}} \mathcal{W},
\end{equation}
where

\begin{longtable}{p{1cm} p{13cm}}
	$h_v$         &  gas use over the past $\delta$ days attributable to validator $v \in \mathbb{V}$. That is, $h_v = g_v + \sum_{i \in \mathbb{A}} g_i\frac{t_i^{[d]}(v)}{t_i}$. \\[4pt]
	$\mathcal{H}$ &  total gas use over the past $\delta$ days attributable to all validators. So $\mathcal{H} = \sum_{v \in \mathbb{V}} h_v$.
	\\[4pt]
	$\hat{h}_v$   &  importance of $v \in \mathbb{V}$, rebased to be comparable with $w_v$
\end{longtable}

\subsection{Validating power}

We first present the simplest model of validating power, which is defined as the number of tokens held by an account.

\dfnn{Validating power - Simple}{
	The validating power of a validator $v \in \mathbb{V}$ is defined as validator's weight. The weight of a validator can be the token helding $t_i$, or the validator's weight $w_v$.}

We then also present a general model of the validating power, as follows.

\dfnn{Validator Score}{
The validator score is the score given for each validator $v \in \mathbb{V}$.}

\begin{longtable}{p{1cm} p{1.5cm} p{11cm}}
	$s_v$    &  $\in [0, 1]$  & the total score of validator $v \in \mathbb{V}$
\end{longtable}

For each validator $v \in \mathbb{V}$, validator score will be a number between $0$ and $1$ representing a composite of three different performance metrics.

\begin{longtable}{p{1cm} p{1.5cm} p{11cm}}
	$s_v^{[w]}$  &  $\in [0, 1]$  & validating performance score of $v$ \\[4pt]
	$s_v^{[t]}$  &  $\in [0, 1]$  & transaction origination score of $v$ \\[4pt]
	$s_v^{[p]}$  &  $\in [0, 1]$  & processing power score of $v$
\end{longtable}

\dfnn{Validating power}{
The validating power of a validator is defined as a weighted average of a validator's weight and importance, multiplied by its score.}

Thus, the validating power of a node is computed by:
\begin{equation}\label{eq:valpower}
p_v = s_v [ \theta \cdot \hat{h}_v + (1 - \theta) \cdot w_v ],
\end{equation}
where

\begin{longtable}{p{1cm} p{13cm}}
	$p_v$  &  validating power of validator $v \in \mathbb{V}$ \\[4pt]
	$P$    &  total validating power of $\mathbb{V}$. So $P = \sum_{v \in \mathbb{V}} p_v$.
\end{longtable}

Note that, Equation~\ref{eq:valpower} gives a general model for calculating validating power, which includes the notion of gas-based importance $h_v$. The gas is used for smart contracts. When gas factor is not included, the validating power of an account $v \in \mathbb{V}$ is simply given by $p_v = w_v$, which is the relative weight of $v$.

\subsection{Block consensus and Rewards}
We then present notations and definitions in our model of rewards.

\dfnn{Validation score} {Validation score of a block is the total validating power that a given block can achieve from the validators $v \in \mathbb{V}$.}

\dfnn{Validating threshold}{Validating threshold is defined by $2/3$ of the validating power that is needed to confirm an event block to reach consensus.}

Below are the variables that define the block rewards and their contributions for participants in Fantom network.

\begin{longtable}{p{1.2cm} p{2.5cm} p{11cm}}
	$Z$      & 996,341,176 & total available block rewards of $FTM$, for distribution by the $SPV$ during the first 1460 days after mainnet launch\\[4pt]
	$F_s$    &   & $FTM$ tokens held by the $SPV$ \\[4pt]
	$F_c$    & $F - F_s$ & total circulating supply \\[4pt]
	$B(d)$   &   & total transaction fees paid by network users on day $d$ \\[4pt]
	$R^{[v]}(d)$ & $\phi * B(d)$  & total transaction rewards retained by the $SPV$ on day $d$ \\[4pt]
	$R^{[x]}(d)$ & $(1-\phi)*B(d)$  & total transaction rewards to be distributed on day $d$ \\[4pt]
	$R^{[b]}(d)$ &   & total block rewards to be distributed on day $d$ \\[4pt]
	$R(d)$   & $R^{[x]}(d) + R^{[b]}(d)$  & total rewards to be distributed on day $d$ \\[4pt]
	$R_v(d)$ & $R(d)*(p_v/P)$  & total daily rewards attributable to validator $v \in \mathbb{V}$ on day $d$, that will be paid out to the validator and to all user accounts who have delegated tokens to that validator \\[4pt]
	$D_i^{[v]}(d)$ &   & total daily delegation rewards received by $i \in \mathbb{A}$ attributable to validator $v \in \mathbb{V}$ on day $d$ (exclusive of validator rewards) \\[4pt]
	$D_i(d)$ & $\sum_{v \in \mathbb{V}} D_i^{[v]}(d)$  & total delegation rewards received by $i$ from all validators on day $d$ \\[4pt]
	$D^{[v]}(d)$ & $\sum_{i \in \mathbb{A}} D_i^{[v]}(d)$  & total delegation rewards received by all delegators attributable to validator $v$ \\[4pt]
	$I_v(d)$ &   & daily validator rewards received by $v \in \mathbb{V}$ on day $d$ (exclusive of delegation rewards)
\end{longtable}

Block rewards will be distributed over 1460 days (4 years less one day) after launch, corresponding to $Z / 1460$ per day during that period:

\begin{equation}
R_b(d)=\begin{cases}
$682,425.46$, & \text{during 1460 days after mainnet launch}.\\
0,          & \text{otherwise}.
\end{cases}
\end{equation}


\subsection{Token Staking and Delegation}\label{se:stakingoverview}

FTM tokens will have multiple uses in the Fantom system. Participants can chose to stake or delegate tokens into their accounts. When staking or delegating, the validating power of a node is based on the number of FTM tokens held, plus other factors such as the “importance” to the network as measured by the gas (described in previous section).

We describe three potential ways for staking to achieve wealth.

\dfnn{Transaction staking}{Participants can gain more stakes or tokens via the transaction staking.}
Transaction submitters will gain transaction fees if the transactions are successfully validated and reach finality.
This style of staking helps increases transaction volume on the network. The more liquidity, the more transaction staking they can gain.

\dfnn{Validation staking}{By validating blocks, a participant can gain validation rewards. Honest participants can gain block rewards for  successfully validated blocks.}

With validation staking, one can join the network using a moderate hardware and a certain small amount of tokens. Once participating the network, the participant can achieve block rewards for blocks that they co-validated and gain transaction fees from transaction submitters for the successfully finalized transactions. The more stake they gain, the validating power they will have and thus the more rewards they can receive as it is proportional to their validating power.

\dfnn{Validation delegation}{Validation deligation allows a participant to deligate all or part of their tokens to another participant(s). Deligating participants can gain a share of block rewards and transaction fees, based on the amount of delegated stake.}

Our model of delegation of stakes between participants allow various coordination amongst participants. Early stage participants can borrow stake from other participants to gain more rewards. Meanwhile, participants with large amount of stakes can deligate their stakes to another participant while still earning some shared rewards.



\subsubsection{Delegating Staking}

Stakeholders will be able delegate a portion of their tokens to validating nodes. Validators will not be able to spend delegated tokens, which will remain secured in the stakeholder’s own address. Validators will receive a fixed proportion of the validator fees attributable to delegators.

Participants are compared by their performance. Higher performing node, with high uptimes and successful validation rates, will earn more rewards. Delegators will be incentivised to choose nodes that have a high validator score, i.e. are honest and high performing.

Delegators can delegate their tokens for a maximum period of days, after which they need to re-delegate. The requirements to delegate are minimal:   

\begin{itemize}
	\item \textit{Security deposit}: None
	\item \textit{Minimum number of tokens to delegate}: 1
	\item \textit{Minimum lock period}: None
	\item \textit{Maximum number of validators a user can delegate to}: None
	\item \textit{Maximum number of tokens that can be delegated to a validator}: 15 times the number of tokens the validator is staking
\end{itemize}


\subsubsection{Transaction-Based Staking}

We present more details of \emph{transaction-based staking} used in \stakedag. With transaction-based staking, FTM holders can stake a portion of their tokens to secure a guaranteed transaction volume on the network.

Staking tokens gives FTM holders a guaranteed transaction slot, which consists of the following:
\begin{itemize}
	\item gas: expressed in FTG/second
	\item data throughput: expressed in Bytes/second
\end{itemize}

The size of the issued slot will be proportional to the transacting power of the tokens staked.

{\bf Network Processing Power and Throughput}  Here, we give an estimate of the maximum gas that can be spent per second. The gas usage depends heavily on the processing power. At the year of this writing, the best desktop processors (such as the Intel Core i9 Extreme Edition) have already reached teraflop speeds - one trillion floating point operations per second. 

Every instruction processed by the Fantom Virtual Machine ("FVM") will carry some overhead, as it has to verify signatures and track gas spent. The below gives an estimate:
\begin{itemize}
	\item Assume that only 10\% of processing power is available for executing transactions and smart contracts. Also assume that a single multiplication by the FVM costs 100 floating-point operations.
	\item FVM can process one billion multiplications per second (i.e 5 billion gas - utilising Ethereum's gas pricing as a guide).
	\item $5,000,000,000 / 21,000 = 238,095$ basic transactions, which are simple transfers of value from one address to another.
	\item A basic transaction has, on average, a size of 120 bytes.
	\item This would result in a data volume of $230,095 * 120$ = 27 MB / second.
\end{itemize}

From the above, a data volume of 27 MB per second is for new transactions. When consensus protocol traffic is taken into account, it becomes quite too high. This means that if the majority of transactions are relatively simple, the main bottleneck will be network throughput.

The current assumptions are:

\begin{itemize}
	\item $\mathcal{F}$ can support 5 billion gas / second.
	\item A reasonably high maximum transaction throughput of 0.5MB / second.
\end{itemize}

{\bf Staking}
Now we show an example of staking. Assume that each basic transaction has a size of 120 Bytes on average.
In order to be guaranteed one basic transaction per second, a user would need to stake tokens with a transacting power corresponding $120 * 6,350 = 762,000$.
If that user's gas usage is in line with his token holdings, his transacting power will be roughly equal to his token holdings. Thus, the necessary number of transaction-staked FTM tokens will also be approximately 762,000.
This corresponds to $762,000 / 0.635 \approx 1,200,000$ FTG per second.

\newpage
\section{Layering-based \stakedag: $S_\phi$ Protocol}\label{se:Sprotocol}

In this section, we present a specific PoS DAG-based protocol, namely $S_\phi$, of the family $\mathfrak{S}$. 
Our general model of \stakedag\ protocols is presented in Section~\ref{se:model}. 
$S_\phi$ protocol is a layering-based approach, which is based on our ONLAY framework~\cite{onlay19}. $S_\phi$ provides a PoS DAG-based solution to build scalable asynchronous distributed ledgers.

\subsection{Framework}
We first give an overal design of the $S_\phi$ protocol. 
Figure~\ref{fig:stakedagsprotocol} shows the overall framework of layering-based \stakedag\ protocol.
Like the general model in Figure~\ref{fig:stakedagframework}, $S_\phi$ protocol includes the steps to update flagtable and validation score for each new event block. Once a block receives more than 2/3 of the validating power of the network, the block becomes a root.

\begin{figure}[htb]
	\centering
	\includegraphics[width=0.95\linewidth]{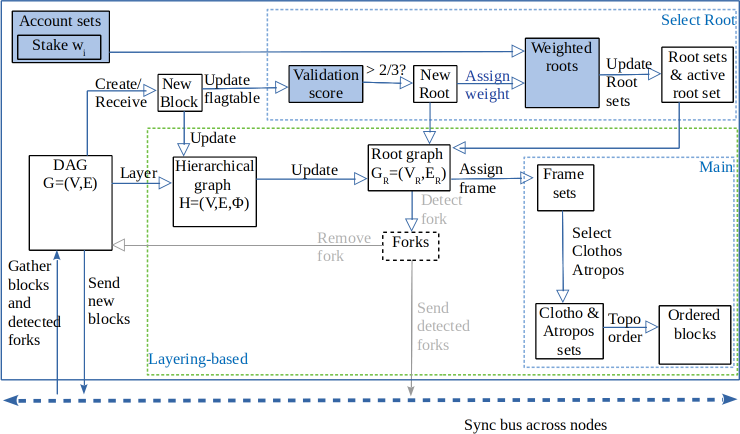}
	\caption{An Overview of Layering-based \stakedag\ Protocol}
	\label{fig:stakedagsprotocol}
\end{figure}

The figure also depicts extra steps used in $S_\phi$ protocol. The main difference with the general model is that $S_\phi$ leverages layering and hierarchical graph obtained from a layering step. Like in ONLAY~\cite{onlay19}, the layer information of the event blocks are used to achieve reliable frame assignment and consensus of finally ordered event blocks. The root graph serves as a convenient data structure to help find new root, detect possible forks and assign new frame to event blocks.

\subsection{Main procedure}
Algorithm~\ref{al:main} shows the main function serving as the entry point to launch \stakedag\ protocol. The main function consists of two main loops, which are they run asynchronously in parallel. The first loop attempts to request for new nodes from other nodes, to create new event blocks and then communicate about them with all other nodes. 
The second loop will accept any incoming sync request from other nodes. The node will retrieve updates from other nodes and will then send responses that consist of its known events.

\begin{algorithm}[H]
\caption{\stakedag\ Main Function}\label{al:main}
\begin{tabular}{ll}
\begin{minipage}{0.5\textwidth}
	\begin{algorithmic}[1]
		\Function{Main Function}{}
		\BState \emph{loop}:
		\State Let $\{n_i\}$ $\leftarrow$ $k$-PeerSelectionAlgo()
		\State Sync request to each node in  $\{n_i\}$
		\State (SyncPeer) all known events to each node in $\{n_i\}$
		\State Create new event block: newBlock($n$, $\{n_i\}$)
		\State (SyncOther) Broadcast out the message
		\State Update DAG $G$
		\State Call ComputeConsensus($G$)
		\BState \emph{loop}:
		\State Accepts sync request from a node
		\State Sync all known events by \stakedag\ protocol
		\EndFunction
	\end{algorithmic} 
\end{minipage}
\begin{minipage}{0.5\textwidth}
	\begin{algorithmic}[1]
		\Function{ComputeConsensus}{DAG}
		\State Apply layering to obtain H-OPERA chain*
		\State Update flagtable
		\State Compute validation score*
		\State Root selection
		\State Compute root graph*
		\State Compute global state
		\State Clotho selection
		\State Atropos selection
		\State Order vertices and assign consensus time
		\State Compute Main chain
		\EndFunction	
	\end{algorithmic}
\end{minipage}
\end{tabular}
\end{algorithm}

In the first loop, a node makes synchronization requests to $k$-1 other nodes to retrieve their top event blocks. It then creates event blocks that references those received event blocks. 
Once created, the blocks are broadcasted to all other nodes. 
In line 3, each node runs the Node Selection Algorithm, to find the next $k$-1 nodes that it will need to communicate with. In line 4 and 5, the node sends synchronization requests to get the latest OPERA chain of these nodes. 
After the latest event blocks are received in the responses, the node creates new event blocks (line 6), and it then broadcasts the created event block to all other nodes (line 7). 
After creating a new event block, the node updates its OPERA chain and then call compute consensus function (line 8 and 9).

The \texttt{ComputeConsensus} function first applies layering on the DAG (line 2). After layering is performed, H-OPERA chain is obtained. For each new event block, it updates the flagtable, which includes roots reachable from the block (line 3). Then from the flagtable, validation score is computed based on the weights of the reachable roots (line 4). It then checks whether the block is a root (line 5) and build/update the root graph (line 6). In line 7, it recomputes a global state based on its local H-OPERA chain. Then the node decides which roots become Clothos (line 8) and which then becomes Atropos (line 9). When new Atropos vertices are confirmed, the algorithm runs a topological sort to get the final ordering for unsorted vertices for final consensus (line 10). Lastly, the main chain is constructed from the Atropos vertices that are newly found and sorted (line 11).

\subsection{Peer selection algorithm}

In order to create a new event block, a node  in \stakedag\ protocol needs 
to synchronize with $k$-1 other nodes for their latest top event blocks. A peer selection algorithm computes a set of $k$-1 nodes that a node should synchronize with.

There are multiple ways to select $k$ - 1 nodes from the set of $n$ nodes. Examples of peer selection algorithms that are mentioned in our previous paper~\cite{onlay19} include:
(1) random peer from $n$ peers;
(2) the least / most used peer(s);
(3) aim for a balanced distribution;
(4) based on some other criteria, such as network latency, successful rates, number of own events.
In general, our protocol does not depend on how peer nodes are selected.

{\bf Stake-based Peer Selection} We also propose new peer selection algorithms, which utilize stakes to select the next peer. Each node has a mapping of the peer $i$ and the frequency $f_i$ showing how many times that peer was selected. These new peer selection algorithms are adapted from the ones above, but take user stakes into account. We give a few examples of the new algorithms, described as follows:
\begin{enumerate}
\item {\bf Stake-based selection
}: select a random peer from $n$ peers with a probability proportional to their stakes $w_i$.
\item {\bf Least used selection}: select the peer with the lowest values of $f_i * w_i$.
\item {\bf Most used selection}: select the peer with the highest values of $f_i * w_i$.
\item {\bf Balance selection}: aim for a balanced distribution of selected peers of a node, based on the values $f_i * w_i$.
\end{enumerate}

There are other possible ways to integrate stakes and stake-related criteria into a peer selection algorithm. For example, we can define new algorithms based some other criteria, such as successful validation rates, total rewards, etc.

\subsection{Peer synchronization}

Now we describe the steps to synchronize events between the nodes, as presented in Algorithm~\ref{al:syncevents}.
In the Event synchronization function, each node $n_1$ selects a random peer $n_2$ (from the set of peers computed by peer selection algorithm).  It then sends a sync request consisting of the local known events of $n_1$ to $n_2$. After receiving the sync request from $n_1$, $n_2$ will  compute an event diff with its own known events and will then return the unknown events to $n_1$.
The algorithm assumes that a node always needs the events in topological ordering (specifically in reference to the lamport timestamps). Alternatively, one can simply use a fixed incrementing index or layer assignment to keep track of the top event for each node.

\begin{algorithm}[H]
	\caption{Peer Synchronization}\label{al:syncevents}
\begin{minipage}{0.48\textwidth}
	\begin{algorithmic}[1]
		\Function{sync-events()}{}		
		\State $n_1$ selects a random peer $n_2$ to synchronize with
		\State $n_1$ gets local known events (map[int]int)
		\State $n_1$ sends a RPC Sync request to peer
		\State $n_2$ receives a RPC Sync request
		\State $n_2$ does an EventDiff check on the known event map (map[int]int)
		\State $n_2$ returns the unknown events to $n_1$		
		\EndFunction
	\end{algorithmic}
\end{minipage} \quad
\begin{minipage}{0.475\textwidth}
\begin{algorithmic}[1]
	\Function{sync-stakes()}{}		
	\State $n_1$ selects a random peer $n_2$ to synchronize with
	\State $n_1$ gets stake updates (map[int]int)
	\State $n_1$ sends a RPC Sync request to peer
	\State $n_2$ receives a RPC Sync request
	\State $n_2$ does an StakeDiff check on the peer-stake map (map[int]int)	
	\State $n_2$ returns update stakes to $n_1$		
	\EndFunction
\end{algorithmic}
\end{minipage}
\end{algorithm}

For stake synchronization, the function works in a similar manner.
Each node $n_1$ computes any new stake updates in its local view. It will select a random peer $n_2$ and then sends a sync request consisting of the local stake updates of $n_1$ to $n_2$. After receiving the sync request from $n_1$, $n_2$ will compute a stake diff with its own known stake values and will then return new events to $n_1$. Note that, any changes to a user's stake can only be applied and updated after a certain check points. For simplicity, we assume that stake synchronization is done between nodes for every 20th frame. 

\subsection{Node Structure}
This section gives an overview of the node structure in \stakedag. A node in \stakedag\ system is associated with a stakeholder that participates with other nodes in the network.

\begin{figure}[ht]
	\centering
	\includegraphics[width=0.6\linewidth]{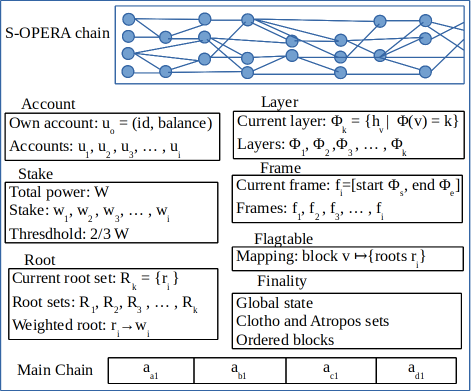}
	\caption{\stakedag\ node structure}
	\label{fig:nodestructure}
\end{figure}

Figure~\ref{fig:nodestructure} shows the structure of a node. 
Each node consists of an S-OPERA (Stake-based) chain of event blocks, which is the local view of the DAG.
Each node stores the account information of all nodes including itself. The account information includes the account id and its current token helding in each account. There are other details including tokens, delegated tokens, as described in Section~\ref{se:staking}. For the sake of simplicity, we only show the account balance (e.g., number of tokens).
The stakes are then computed from account balance of all nodes. Total validating power and validating threshold are updated accordingly.

In each node, it stores the layering information (the resulting H-OPERA chain), roots, root graph, frames, Clotho set and Atropos set.
The top event block of each peer is the most recently created / received event block by that peer node.
A block can become a root if it receives more than the validating threshold, which is 2/3 of the validating power of the network. A root will become a Clotho when it is further known by another root. 
A frame contains a set of roots of that frame. 
When a root becomes Clotho, its Clotho flag is set on and then it is sorted and assigned a final ordering number. After getting the final ordering, the Clotho is then promoted to an Atropos. The Main-chain is a data structure storing hash values of the Atropos blocks. 

\subsection{Event block creation}
Like previous Lachesis protocol~\cite{lachesis01}, $S_\phi$ protocol allows every node to create event blocks. In order to generate a new event block, a node $i$ sends synchronization requests to $k$ - 1 other nodes to get their latest event blocks. Then node $i$ can create a new event block. A new event block has $k$ references: one is self-ref referencing the top event block of the same node, $k$-1 other references referencing the top event blocks of $k$-1 peers. With cryptographic hashing, an event block can only be created or added if the self-ref and other-ref event blocks exist.

\subsection{Layering}\label{se:consen}
We then present the main concepts and algorithms of the $S_{\phi}$ protocol in our \stakedag\ framework. This layering-based PoS approach is based on our ONLAY paper~\cite{onlay19}. 
Intuitively, $S_{\phi}$ integrates layering algorithms on the S-OPERA chain and then use the assigned layers reach consensus of the event blocks.

For a DAG $G$=($V$,$E$), a layering $\phi$ of $G$ is a mapping $\phi: V \rightarrow Z$, such that for any directed edge ($u$,$v$) $\in E$, the layer of the source $v$ is greater than the layer of the sink $u$; that is, $\phi(v)$ $\geq$ $\phi(u)$ + 1. 
Thus, $\phi$ partitions the set of vertices $V$ into a finite number of \emph{non-empty} disjoint subsets (called layers) $V_1$,$V_2$,$\dots$, $V_l$, such that $V$ = $\cup_{i=1}^{l}{V_i}$. Each vertex is assigned to a layer $V_i$.

Recall that S-OPERA chain is a DAG $G$=($V$,$E$) stored in each node. Each vertex in $V$ is an event block, which has a validation score. By applying $\phi$ on the S-OPERA chain, one can obtain the hierarchical graph of $G$, which is called H-OPERA chain, which is the resulting hierarchical graph $H = (V,E, \phi)$. One can apply either LongestPathLayer (LPL) algorithm or Coffman-Graham (CG) algorithm on the graph $G$ (see details in ~\cite{onlay19}).

In \stakedag, the S-OPERA chain is evolunary as each node creates and synchronizes events with each other.Let us consider a model of dynamic S-OPERA chain. For a node $i$, let $G$=($V$,$E$) be the current S-OPERA chain and $G'$=($V'$,$E'$)
denote the \emph{diff graph}, which consists of the changes to $G$ at a time, either at block creation or block arrival.The vertex sets $V$ and $V'$ are disjoint; similar to the edge sets $E$ and $E'$. At each graph update, the updated S-OPERA chain becomes $G_{new}$=($V \cup V'$, $E \cup E'$). 


\subsubsection{Online Layer Assignment}
We adopt the online layering algorithms introduced in~\cite{onlay19}. 
Specifically, we use \emph{Online Longest Path Layering} (O-LPL) and \emph{Online Coffman-Graham} (O-CG) algorithm, which assign layers to the vertices in diff graph $G'$=($V'$,$E'$) consisting of new self events and received unknown events.
The algorithms are efficient and scalable to compute the layering of large dynamic S-OPERA chain.
Event blocks from a new graph update are sorted in topological order before being assigned layering information. The two algorithms are presented in Appendix~\ref{se:app-onlinelayering}.

\subsubsection{Layer Width Analysis in BFT}\label{sec:layerwidth}
We then give ourformal analysis of the layering algorithms with respect to the Byzantine fault tolerance.
BFT addresses the functioning stability of the network when Byzantine nodes may take up to $\mathcal{W}/3$ validating power, where $\mathcal{W}$ is the total validating power of the network.

Let $\phi_{LP}$ denote the layering of $G$ obtained from LongestPathLayering algorithm. Let $\phi_{CG(W_{max})}$ denote the layering of $G$ obtained from the Coffman-Graham algorithm with some fixed width $W_{max}$. 

{\bf Byzantine free network} We first consider the case in which the network is free from any faulty nodes; all nodes are honest and thus zero fork exists.

\begin{prop}
	In the absense of dishonest nodes, the width of each layer $L_i$ is at most $n$.
\end{prop}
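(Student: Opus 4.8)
The plan is to exploit a one-block-per-node-per-layer principle that follows from the self-parent chain structure. First I would recall the event-block creation rules: every event block of a node $n_i$ carries exactly one self-parent reference to the previous top block of the same node. In the absence of dishonest nodes no forks occur, so these self-parent edges link all of $n_i$'s blocks into a single directed path $b_1 \to b_2 \to \cdots \to b_m$, where each $b_j$ is the self-parent of $b_{j+1}$. This chain is the structural backbone of the whole argument.

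Next I would invoke the defining property of the layering $\phi$: for every directed edge $(u,v) \in E$ we have $\phi(v) \geq \phi(u)+1$, so $\phi$ is strictly increasing along any directed path. Applying this to the self-parent path of $n_i$ yields $\phi(b_1) < \phi(b_2) < \cdots < \phi(b_m)$, so the layer values of $n_i$'s blocks are pairwise distinct. Consequently each layer $L_i$ can contain at most one event block created by node $n_i$. Summing over the $n$ nodes, each contributing at most one block to a given layer, we obtain $|L_i| \le n$, which is exactly the claimed width bound.

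The main obstacle — really the only point needing care — is justifying rigorously that the no-fork assumption forces each node's blocks into a single self-parent chain rather than a branching tree. A fork is precisely a node emitting two event blocks neither of which is a self-ancestor of the other; ruling this out is what collapses the self-parent relation of each node to a total order, i.e.\ a path. Once that is established, the strict monotonicity of $\phi$ along directed edges does the rest, and no computation is required. I would also note in passing that this is where the hypothesis is used essentially: with forks present a node could place two incomparable blocks in the same layer, so the bound $n$ would be the right statement only for honest executions, foreshadowing the Byzantine case treated next.
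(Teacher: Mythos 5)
Your proof is correct and takes essentially the same route as the paper's: both arguments reduce to the pigeonhole over the $n$ creators combined with the fact that $\phi$ strictly increases along self-parent edges, so that two same-creator blocks in a single layer would have to form a fork, which the honesty assumption excludes. The only difference is presentational — the paper argues by contradiction (width $> n$ forces two same-creator events in one layer, hence a fork) and leaves the layering-monotonicity step implicit, whereas you argue directly along the self-parent chain and spell that step out, which if anything makes your version slightly more complete than the paper's.
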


This can be easily proved by induction. Since there are $n$ nodes, there are $n$ leaf vertices. The width of the first layer $L_1$ is $n$ at maximum, otherwise there exists a fork, which is not possible. We assume that the theorem holds for every layer from 1 to $i$. That is, the $L_i$ at each layer has the width of at most $n$. We will prove it holds for layer $i+1$ as well.
Since each honest node can create at most one event block on each layer, the width at each layer is at most $n$.
We can prove by contradiction. Suppose there exists a layer $|\phi_{i+1}| > n$. Then there must exist at least two events $v_p$, $v_q$ on layer $\phi_{i+1}$ such that $v_p$ and $v_q$ have the same creator, say node $n_i$. That means $v_p$ and $v_q$ are the forks of node $n_i$. It is a contradiction with the assumption that there is no fork. Thus, the width of each layer $|\phi_{i+1}| \leq n$. It is proved.

Thus, in the absence of forks, LongestPathLayering(G) and CoffmanGraham(G,n) computes the same layer assignment. That is, $\phi_{LP}$ and $\phi_{CG(n)}$ are identical; or
 $\forall v \in G$, $\phi_{LP}(v)$ = $\phi_{CG(n)}(v)$.

{\bf 1/3-BFT} Now, let us consider the layering result for the network with at most $\mathcal{W}/3$ validating power are compromised. In this case, there are about $n/3$ dishonest nodes in average, and about $n$-1 dishonest nodes in the worst case.

Without loss of generality, let $w_p$ be the probability that a node can create fork.
Let $w_c$ be the maximum number of forks a faulty node can create at a time. The number of forked events a node can create at a time is $w_p w_c$.
The number of fork events at each layer is given by 
$W_{fork} = (n-1) w_p w_c$.
Thus, the maximum width of each layer is given by:
$$W_{max} = n + W_{fork} = n + (n-1) w_p w_c \approx  (1+w_p w_c)n$$

Thus, if we set the maximum width high enough to tolerate the potential existence of forks, we can achieve BFT of the layering result. The following theorem states the BFT of LPL and CG algorithms.

LongestPathLayering($G$) and CoffmanGraham($G$, $W_{max}$) computes the same layer assignment. That is, $\phi_{LP}$ and $\phi_{CG(W_{max})}$ are identical; or for each vertex $v \in G$, $\phi_{LP}(v) = \phi_{CG(W_{max})}(v)$.

\subsubsection{Root Graph}\label{se:rootgraph}
With layering information, we propose to a new data structure, called \emph{root graph}, 
which gives a more efficient way to compute new roots. The root graph can be used together with the layers for fast computation of frames.
This data structure is slightly different from the one in~\cite{onlay19} as we use validating powers of roots instead of the root count.

\dfnn{Root graph}{
	A root graph $G_R$=($V_R$, $E_R$) is a directed graph consisting of roots as vertices, and  their reachable connections as edges.}

The root graph $G_R = (V_R, E_R)$ contains vertices as roots $V_R  \subseteq V$, and the set of edges $E_R$  the reduced edges from $E$, such that $(u,v) \in E_R$ only if $u$ and $v$ are roots and there is a path from $u$ to $v$ following edges in $E$.
The root graph initially contains the $n$ genesis vertices --- leaf event blocks. When a vertex $v$ reaches $2\mathcal{W}/3$ of the validating power of the current root set $V_R$, it becomes a root. For each root $r_i$ that new root $r$ reaches, we include a new edge ($r$, $r_i$) into the set of root edges $E_R$. If a root $r$ reaches two roots $r_1$ and $r_2$ of the same node, then we retain only one edge ($r$, $r_1$) or ($r$, $r_2$) if $\phi(r_1) >\phi(r_2)$. This requirement makes sure each root of a node can have at most one edge to any other node.

\begin{algorithm}
	\caption{Root graph algorithm}\label{al:buildrootgraph}
	\begin{algorithmic}[1]
		\State Require: H-OPERA chain $H$
		\State Output: root graph $G_R=(V_R,E_R)$
		\State{$R \leftarrow$ set of leaf events}
		\State $V_R \leftarrow R$
		\State $E_R \leftarrow \emptyset$
		
		\Function{buildRootGraph}{$H$, $\phi$, $l$}
		\For{each layer $i$=1..$l$}
		\State$Z \leftarrow \emptyset$
		\For{each vertex $v$ in layer $\phi_i$}
		\State $S \leftarrow$ the set of vertices in $R$ that $v$ reaches
		\State $w_S$ $\leftarrow$ the total weights of roots in $S$ (*)
		\If{$w_S > 2W/3$} (*)
		\For{each root $r_i  \in S$}
		\State $E_R \leftarrow E_R \cup \{(v,r_i)\}$
		\EndFor
		\State $V_R \leftarrow V_R \cup \{v\}$
		\State $Z \leftarrow Z \cup \{v\}$ 
		\EndIf
		\EndFor
		
		\For{each vertex $v_j \in Z$}
		\State Let $v_{old}$ be a root in $R$ such that $cr(v_j) = cr(v_{old})$		
		\State $R \leftarrow R \setminus \{v_{old}\} \cup \{v_j\}$
		\EndFor
		\EndFor
		\EndFunction
	\end{algorithmic}
\end{algorithm}

\subsection{Root selection}
For \stakedag, we propose a new approach that uses root graph and frame assignment.
One can also use a Root selection algorithm described in our previous paper~\cite{fantom18}.

The steps to build the root graph out of an H-OPERA chain is given in Algorithm~\ref{al:buildrootgraph}. 
At the start of each layer, $Z$ is set to an empty set.
The algorithm processes from layer 0 to the maximum layer $l$. $R$ is the set of current active roots. At initial stage, $R$ contains only the  the $n$ genesis vertices --- leaf event blocks (Line 3). 
We will update the set after finding new roots at each layer. Line 4 and 5 set the initial value of the vertex set and edge set of the root graph. For each layer $i$, it runs the steps from line 8 to 18. Let $Z$ denote the set of new roots found at layer $i$. $Z$ is initially set to empty (Line 8). We then process each vertex $v$ in the layer $\phi_i$ (line 9). We compute the set $S$ of all active roots in $R$ that $v$ can reach (Line 10). The total weight $w_S$ is computed as the total weights of all roots in $R$ (Line 11). $w_S$ is the validation score of $v$. If $w_S$ is greater than $2\mathcal{W}/3$, $v$ becomes a new root (line 12). For every root $r_i$ in $S$, we add an edge $(v,r_i)$ into $E_R$ (line 13-14).
In lines 15-16, we add $v$ into $V_R$, and also into $Z$ to keep to keep track of the new roots found at layer $i$. 
The second inner loop will replace any new roots found at layer $i$ with the existing ones in current active set $R$ (line 17-19).

The algorithm always updates the active set $R$ of roots at each layer. The number of active roots in $R$ is always $n$, though the active roots can be from different layers.

\subsection{Clotho selection}

For pBFT, consensus is reached once an event block is known by more than 2/3$n$ of the nodes and that information is further known by 2/3$n$ of the nodes.

In our PoS $S_\phi$ protocol, we define the consensus of a root as the condition the root becomes a Clotho. If it can be reached by more than $2\mathcal{W}/3$ validating power of the roots and the information is confirmed by another $2\mathcal{W}/3$ validating power of the roots.
That is, for a root $r$ at frame $f_i$, if there exists a root $r'$ at a frame $f_j$ such that $r'$ reaches $r$ and $j$ $\geq$ $i$+2, then the root $r$ reaches pBFT consensus. The root $r'$ is called the \emph{nominator}, which nominates $r$ to become a Clotho.

\begin{prop}[Global Clotho]
	For any two honest nodes $n_i$ and $n_j$, if $c$ is a Clotho in $H_i$ of $n_i$, then $c$ is also a Clotho in $H_j$ of $n_j$.
\end{prop}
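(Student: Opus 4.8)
The plan is to reduce the global Clotho property to the consistency, across honest local views, of three more primitive quantities: the reachability relation among event blocks, the root determination, and the frame assignment. The Clotho condition stated in Section~\ref{se:consen} is defined purely in terms of these three quantities (a root $c$ at frame $f_i$ together with a nominating root $r'$ at a frame $f_j$ with $j \geq i+2$ that reaches $c$), so once each quantity is shown to agree in $H_i$ and $H_j$, the property transfers immediately. I would run the whole argument in the eventual sense justified by the gossip/broadcast assumption of \stakedag: every event block created by an honest node is eventually delivered to every honest node, so the local H-OPERA chains $H_i$ and $H_j$ eventually contain the same blocks, and all claims below concern this shared, stabilized view.

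First I would establish reachability consistency and monotonicity. Because each event block is immutable and references its parents by cryptographic hash, an edge $u \eref v$ present in one honest DAG appears unchanged in every honest DAG containing both $u$ and $v$, and no graph update ever deletes an existing edge. Hence the ancestor/reachability relation restricted to shared blocks is identical in $H_i$ and $H_j$; in particular the statement ``$r' \eref c$'' is view-independent.

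Next, by induction on the frame number, I would prove that honest nodes agree on the active root sets and therefore on the frame of every root. The base case is immediate: the $n$ genesis leaf blocks are roots in every view by definition. For the inductive step, a candidate $v$ becomes a root exactly when the total validating power of the active roots it reaches exceeds $2\mathcal{W}/3$; by the induction hypothesis the active root set $R$ is the same in both views, the weights $w_v$ agree (staking values are synchronized across honest nodes at the agreed checkpoints, so $\mathcal{W}$ and each $w_v$ are identical at the relevant frames), and reachability agrees by the previous step, so the quorum test returns the same verdict in $H_i$ and $H_j$. Since frames are a deterministic function of the (consistent) layer assignment — which is identical in both views by the BFT layering analysis of Section~\ref{sec:layerwidth} — and of the root sets, $c$ is a root at the same frame $f_i$ in both views and its nominator $r'$ is a root at the same frame $f_j$ in both views.

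Assembling the pieces, in $H_i$ the block $c$ is a root at frame $f_i$ with a root $r'$ at frame $f_j$, $j \geq i+2$, and $r' \eref c$; by root/frame consistency $c$ and $r'$ are roots at frames $f_i$ and $f_j$ in $H_j$, and by reachability consistency $r' \eref c$ holds in $H_j$, so $c$ meets the Clotho condition there. The main obstacle is the inductive root-set step, and within it the quorum-intersection argument needed to rule out two honest nodes reaching opposite verdicts on the $2\mathcal{W}/3$ threshold: since faulty nodes command at most $\mathcal{W}/3$ of the validating power, any two qualifying quorums of more than $2\mathcal{W}/3$ power intersect in more than $\mathcal{W}/3$ power and hence share an honest validator, which forbids divergent root decisions despite equivocation or forks. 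The remaining delicate point is handling asynchrony cleanly — the two nodes may decide at different times, but the decision is monotone and deterministic once the witnessing blocks arrive, so the stabilized verdicts coincide with probability one.
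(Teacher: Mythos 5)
Your proposal is correct in substance, but it takes a noticeably different and heavier route than the paper. The paper's own proof is essentially two lines: assuming the nominating root $r$ exists in both views, the ``consistent chains'' property (hash-chained, immutable references imply $G[r]=G'[r]$ for any block common to two honest nodes) makes the entire subgraph under $r$ identical in $H_i$ and $H_j$, and since Clotho nomination is a deterministic function of that subgraph, $r$ nominates $c$ in both views; agreement on roots, frames, and validation scores is not re-proved there but deferred to separate propositions in the Appendix (consistent roots by induction on frames, consistent layering, consistent flag tables). You instead rebuild that supporting machinery inline: reachability consistency from hashes, then an induction on frame numbers for root-set and frame agreement, then assembly — your induction is essentially a re-derivation of the Appendix's ``root consistent'' proposition. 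One technical remark: the quorum-intersection step you flag as the main obstacle is not actually needed for this statement. Because the root status of a common block is a deterministic function of its subgraph, and subgraphs of common blocks are identical, honest verdicts on the $2\mathcal{W}/3$ threshold cannot diverge for a shared block; quorum intersection is what the paper uses for the distinct Fork Detection lemma (two forked blocks cannot both become roots across honest nodes), which matters for safety of finalization but not for transferring Clotho status of a single common block. Your argument is not wrong — it is needed only if one formulates root selection via a per-node global active root set that forks could desynchronize — but it can be dropped under the subgraph-determinism view. What each approach buys: the paper's proof is economical but leans entirely on lemmas stated elsewhere and leaves the presence of $r$ in both views implicit; yours is self-contained, makes the eventual-delivery assumption and the stake-synchronization assumption explicit, and would survive even in a formulation where root selection is stated relative to global state rather than per-block subgraphs.
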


Assume a block $r$ that exists in both nodes $n_i$ and $n_j$. Since honest nodes are consistent nodes, then the subgraphs $G[r]$ and $G'[r]$ are the same on both nodes.
Suppose $r$ nominates a root $c$ to be a Clotho in $n_i$. Because the subgraphs are identical, it can be deduced that $r$ also nominates the same root $c$ as a Clotho in $n_j$.

\subsection{Atropos Selection}

When new Clotho event blocks are found, we will order them and then will assign the final \emph{consensus time} for them. We first sort the Clothos based on topological ordering. Section~\ref{se:toposort} gives more details on the topological ordering and consensus time assignment. 

Once a Clotho is assigned with a consensus time, it becomes an Atropos.
Each node stores the hash value of Atropos and Atropos consensus time in the Main-Chain (blockchain). The Main-chain is used for quick retrieval of the final ordering between event blocks.

\begin{prop}[Global Atropos]
	For any two honest nodes $n_i$ and $n_j$, if $a$ is an Atropos in $H_i$ of $n_i$, then $a$ is also an Atropos in $H_j$ of $n_j$.
\end{prop}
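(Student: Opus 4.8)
The plan is to build directly on the Global Clotho proposition together with the consistency of honest nodes. Recall that an Atropos is precisely a Clotho that has been assigned a consensus time; hence it suffices to establish two things for the block $a$: first, that $a$ is a Clotho in $H_j$, and second, that the consensus time assigned to $a$ is identical in $H_i$ and $H_j$. The first point is immediate from the Global Clotho proposition, since an Atropos is in particular a Clotho, so the real content of the argument lies entirely in agreement on the consensus time.

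First I would argue that the consensus time of a Clotho is a \emph{deterministic} function of the subgraph $G[a]$ reachable from $a$, together with the roots that nominate and confirm $a$. Because $n_i$ and $n_j$ are honest, hence consistent, nodes, for any block $r$ occurring in both local views the induced subgraphs $G[r]$ and $G'[r]$ coincide---this is exactly the property already exploited in the proof of Global Clotho. In particular, once $a$ has been finalized as an Atropos in $H_i$, every root used to assign its consensus time lies within $G[a]$, and these same roots, with the same reachability relations and the same layer and frame assignments, appear identically in $H_j$.

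Feeding identical inputs into the deterministic topological-sort and time-assignment procedure then yields the same consensus time on both nodes, so $a$ is promoted to an Atropos in $H_j$ with exactly the same time value. I would lay the steps out in this order: (i) invoke Global Clotho to get $a$ as a Clotho in $H_j$; (ii) identify the set of nominating and confirming roots that determined the consensus time of $a$ in $H_i$; (iii) use the consistency of honest nodes to transfer this set, unchanged, into $H_j$; and (iv) conclude by determinism of the time-assignment rule that the assigned time, and therefore the Atropos designation, agrees.

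The main obstacle will be step (iii): justifying that the time-assignment procedure reads only from the shared, already-finalized portion of the DAG, so that no honest node can commit a consensus time for $a$ using a root that another honest node has not yet observed. I would close this gap by appealing to the Atropos finalization condition itself---a consensus time is committed only after the required confirming roots (the nominators at frames $i+2$ and beyond) are present---together with the eventual-delivery and consistency guarantees of the leaderless aBFT setting, which ensure these confirming roots are seen identically by every honest node. Hence the computation is performed over a stable, globally agreed sub-DAG, and the resulting Atropos and its consensus time are the same in $H_i$ and $H_j$, which completes the argument.
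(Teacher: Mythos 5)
Your proposal is correct and follows essentially the same route as the paper: invoke the Global Clotho proposition, then use the consistency of honest nodes together with the determinism of the layering, topological sorting, and consensus-time assignment to conclude that $a$ receives the same consensus time on both nodes. Your extra care in step (iii) about the time computation reading only shared data is a slightly more explicit treatment of what the paper compresses into ``the layering is deterministic and so is the topological sorting,'' but it is the same argument.
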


From the above Global Clotho proposition, a root is a Clotho in $n_i$ then it is also a Clotho in $n_j$. The layering is deterministic and so is the topological sorting. Hence, the sorted list of Clothos on two nodes are identical. Thus, the assigned consensus time for the Clotho $c$ is the same on both nodes. An Atropos block $a$ in $n_i$ is also an Atropos in $n_j$, and they have the same consensus time.

\subsection{Frame Assignment}\label{se:frameassignment}

We then present a deterministic approach to frame assignment, which assigns each event block a unique frame number.
First, we show how to assign frames to root vertices via the so-called \emph{root-layering}. 
For a root graph $G_R=(V_R, E_R)$, we assign frame number to roots $v_r$ of $V_R$ as a function $\phi_R$, as follows:
\begin{itemize}
	\item $\phi_R(u) \geq \phi_R(v) + 1$,  $\forall (u,v) \in E_R$.
	\item if $u$ reaches at least $2\mathcal{W}/3$ validating power of the roots of frame $i$, then $\phi_R(v)$ = $i + 1$.
\end{itemize}

Second, we then assign frame numbers to non-root vertices with respect to the topological ordering of the vertices. The frame assignment to roots are used to assign frames to non-root vertices. 
For vertices in the layers between layer $i$ and layer $i$ +1, the vertices are assigned the frame number $i$.

\subsection{Fork detection and removal}

When a node $n$ receives a Sync request from another node, $n$ will check if the received events would cause a fork. In 1/3-BFT system, we can prove that there exists an honest node that sees the fork. That honest node will remove one of the forked events from its S-OPERA chain and will then notify the fork to all other nodes.

\dfnn{Fork}{
	Two events $v_x$ and $v_y$ form a \emph{fork} if they have the same creator, but neither is a self-ancestor of the other. Denoted by $v_x \efork v_y$.}

The fork relation is symmetric; that is $v_x \efork v_y$ iff $v_y \efork v_x$.
For two events $v_x$ and $v_y$ of the same creator, we can prove the following: $v_x \efork v_y$ iff $v_x \concur v_y$. By definition of fork, ($v_x$, $v_y$) is a fork implies that $v_x \not \eancestor v_y$ and $v_y \not \eancestor v_x$. It means $v_x \not \rightarrow v_y$ and $v_y \not \rightarrow v_x$ (by Happened-Before) and thus $v_x \concur v_y$ (by definition of concurrent).

\begin{lem} (Fork Detection). If there is a fork $v_x \efork  v_y$, then $v_x$ and $v_y$ cannot both be roots on two honest nodes.
\end{lem}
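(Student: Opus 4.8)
The plan is to argue by contradiction, combining a quorum-intersection argument with the one-third bound on Byzantine validating power that was fixed in Section~\ref{sec:layerwidth}. Suppose $v_x \efork v_y$ and yet, contrary to the claim, $v_x$ is certified as a root in the S-OPERA chain of an honest node $n_i$ and $v_y$ is certified as a root in the S-OPERA chain of an honest node $n_j$. By the defining condition of a root, each event attains the validating threshold: there is a set $Q_x$ of nodes whose roots witness $v_x$ with total validating power $w(Q_x) > 2\mathcal{W}/3$, and likewise a set $Q_y$ witnessing $v_y$ with $w(Q_y) > 2\mathcal{W}/3$, where $w(\cdot)$ denotes total validating power. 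These two quorums are the objects whose overlap I would exploit.

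The first step is to bound the intersection. Since $w(Q_x) > 2\mathcal{W}/3$ and $w(Q_y) > 2\mathcal{W}/3$, inclusion--exclusion against the total $\mathcal{W}$ forces $w(Q_x \cap Q_y) > \mathcal{W}/3$. In the $1/3$-BFT setting the combined validating power of all faulty nodes is at most $\mathcal{W}/3$, so $Q_x \cap Q_y$ cannot be composed of Byzantine nodes alone and must contain at least one honest node $h$. This $h$ is the single witness that ties the two branches of the fork together, and the rest of the argument is about what its honesty forces.

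The second step is to promote membership of $h$ in both quorums into the statement that $h$ genuinely \emph{sees} both forked events. Because $h \in Q_x$ its witnessing root reaches $v_x$, and because $h \in Q_y$ its witnessing root reaches $v_y$; hence $h$ holds both $v_x$ and $v_y$ in its local S-OPERA chain. Since the two events share the same creator and satisfy $v_x \not\eselfancestor v_y$ and $v_y \not\eselfancestor v_x$, the pair is a fork in $h$'s own view, so the fork-detection check that $h$ runs on each incoming Sync request fires; an honest node therefore discards one of the two and never certifies both as roots. Because honest nodes agree on the ancestor relation and on the active root set (references are cryptographic hashes), this local inconsistency at $h$ cannot coexist with $v_x$ standing as a root at $n_i$ and $v_y$ at $n_j$, giving the desired contradiction.

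The delicate point, and where I expect the real work to lie, is exactly this second step: I must make precise that membership in the two quorums puts both forks into $h$'s view as things it reaches, rather than leaving $h$ as a mere common \emph{ancestor} of $v_x$ and $v_y$, which would carry no detection content. This is a matter of fixing the direction of reachability in the root condition and reading the threshold in the forward ``$h$'s root reaches $v$'' sense, consistent with the Clotho condition of Section~\ref{se:rootgraph}. I would also separately dispose of the degenerate case where $v_x$ and $v_y$ are both first (genesis) events of their creator: here there is no previous root set to intersect, but two leaf events with a common creator are themselves an immediate fork that any honest node holding both rejects, so the same detection step closes the argument.
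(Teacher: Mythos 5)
Your proposal follows essentially the same route as the paper's own proof: assume for contradiction that $v_x$ and $v_y$ are roots on two honest nodes, extract the two quorums of more than $2\mathcal{W}/3$ validating power, intersect them to obtain more than $\mathcal{W}/3$, invoke the $1/3$-BFT assumption to find an honest node in the overlap, and derive the contradiction from that node's refusal to accept a fork. The one respect in which you go beyond the paper is your ``delicate point'': the paper's proof simply asserts ``as $n_h$ is honest, $n_h$ does not accept the fork'' without ever establishing that $n_h$ holds both $v_x$ and $v_y$ in its view --- and since the paper's root definition has the candidate reaching the prior roots (so the overlap member is a priori only a common \emph{ancestor} of the two forks), the forward reading of reachability that you insist on, together with your separate handling of the genesis case, is exactly what is needed to close a gap that the paper's own argument leaves implicit.
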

\begin{proof}
	Since any honest node cannot accept a fork, $v_x$ and $v_y$ cannot be roots on a single honest node. Now we prove a more general case, showing a proof by contradiction. 
	
	Suppose that $v_x$ is a root of $n_x$, and $v_y$ is root of $n_y$, where $n_x$ and $n_y$ are two distict honest nodes. Because $v_x$ is a root, it reached roots of node set $S_1$ whose weights account for more than $2/3$ of total validating power. Similarly, $v_y$ is a root, it reached roots of node set $S_2$ that has a total weight greater than $2\mathcal{W}/3$. Thus, there must exist an overlap of two sets $S_1$ and $S_2$, whose total weights is more than $W/3$.
	For 1/3-BFT system, it is assumed that dishonest nodes account for less than $W$/3 of the power. Hence, there must be at least one honest member in the overlap set. Let $n_h$ be such an honest member. As $n_h$ is honest, $n_h$ does not accept the fork. This gives a contradiction. The lemma is proved.
\end{proof}

When an honest node finds a new root $r$, there is a guarantee that there exists no forked events  in the subgraph under $r$. If there were, the fork should have been detected and removed prior to the promotion of $r$.

\begin{prop}[Fork-free]
	For any Clotho $c$, the subgraph $G[c]$ is fork-free.
\end{prop}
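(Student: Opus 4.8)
The plan is to reduce the statement to a property of roots and then argue by contradiction. Note first that a Clotho is by definition a root (a root at frame $i$ that is additionally nominated by a root at frame $i+2$), so $G[c]$ is exactly the ancestor subgraph lying under a root that an honest node has promoted. I would therefore aim to show that the subgraph under any root accepted by an honest node contains no fork, and then transport this to all honest nodes using the Global Clotho proposition, which guarantees that $c$, and hence the subgraph $G[c]$, is identical on every honest node that holds it.

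First I would suppose, for contradiction, that $G[c]$ contains a fork $v_x \efork v_y$, so that both $v_x$ and $v_y$ are ancestors of $c$ and $c$ reaches both. Let $n$ be an honest node on which $c$ is a Clotho. Since fork detection is performed on every incoming Sync request, $n$ would have detected the fork at the moment the second of $v_x, v_y$ arrived and removed one of them from its S-OPERA chain; hence $n$ cannot retain both as ancestors of $c$, contradicting that $c$ reaches both forked events. To rule out the more delicate cross-node scenario---in which one honest node keeps $v_x$ and another keeps $v_y$, each promoting its copy toward a root---I would invoke the Fork Detection Lemma: since any root dominating $v_x$ (resp. $v_y$) must reach more than $2\mathcal{W}/3$ of the validating power, the two witnessing root sets overlap in more than $\mathcal{W}/3$ of power, forcing an honest node into the overlap that would have refused the fork. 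Thus no fork below $c$ can survive into two independently promoted roots, and the subgraph under $c$ is fork-free on the local view; Global Clotho then lifts this to every honest node.

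The main obstacle I anticipate is making rigorous the claim, asserted only informally in the remark preceding the proposition, that every fork occurring below $c$ has necessarily been detected and resolved before $c$ is promoted---in particular when the forked events are deep, non-root ancestors rather than roots, so that the Fork Detection Lemma does not apply to them verbatim. The crux is to argue that the two nominating roots at frames $i+1$ and $i+2$ that make $c$ a Clotho guarantee sufficient honest coverage of $G[c]$: any fork in $G[c]$ lies below roots that these nominators reach through more than $2\mathcal{W}/3$ of the validating power, so an honest witness must have observed both forked events and triggered the removal step. Formalizing this coverage argument, and confirming that removing a forked event does not disturb the consistency of $G[c]$ across honest nodes, is where the real work lies.
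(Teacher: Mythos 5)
Your proposal follows essentially the same route as the paper's own proof: both argue that any fork under $c$ was detected and removed via sync-time fork detection before $c$ was promoted to a root, use the $2\mathcal{W}/3$ quorum-overlap argument (the Fork Detection Lemma) to rule out forks surviving across nodes, and then invoke the Clotho definition's nested $2\mathcal{W}/3$ coverage to conclude that honest nodes with sufficient power removed the forks through peer synchronization. The gap you honestly flag---that the claim ``every fork below $c$ is detected and resolved before $c$ becomes a root'' is only asserted informally, especially for deep non-root ancestors---is present in exactly the same form in the paper's proof, which states it without further justification, so your attempt is no less rigorous than the original.
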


When $c$ became a root, any forks under the subgraph $G[c]$ has already been detected and removed from $G$. The node then notified / synchronized with peers about the forks it found, and also might receive any notified forks from other peers. By definition of Clotho, $c$ is known by roots reached $2\mathcal{W}/3$ validating power and those are in turn known by another set of roots of $2\mathcal{W}/3$ validating power. Thus, there exist nodes with more than $2\mathcal{W}/3$ validating power knew about and removed the detected forks via peer synchronization.

\begin{prop}[Fork-free Global chain]
	The global consistent OPERA chain $G^C$ is fork-free.
\end{prop}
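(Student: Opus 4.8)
The plan is to reduce the statement to the already-established Fork-free proposition for a single Clotho. Recall from the definition of the Main chain that the global consistent OPERA chain is the union $G^C = \bigcup_{a} G[a]$ taken over all Atropos blocks $a$, and that every Atropos is by definition a Clotho equipped with a consensus time. Hence the previous proposition applies to each piece: for every Atropos $a$, the subgraph $G[a]$ is fork-free. The only remaining work is to show that taking the union over all Atropos cannot create a fork that was absent from each individual subgraph.

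First I would argue by contradiction. Suppose $v_x \efork v_y$ with both $v_x, v_y \in G^C$. By construction there are Atropos $a_x$ and $a_y$ with $v_x \in G[a_x]$ and $v_y \in G[a_y]$. If a single subgraph contained both vertices we would immediately contradict the fork-freeness of that subgraph, so we may assume $v_y \notin G[a_x]$ and $v_x \notin G[a_y]$, and in particular $a_x \ne a_y$.

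The main obstacle is therefore ruling out a fork that straddles two distinct Atropos subgraphs. I would handle this by exploiting the frame and reachability structure used to select Clothos: Atropos blocks are totally ordered by their consensus time, and by the root-layering each later root reaches at least $2\mathcal{W}/3$ of the validating power of the roots of the preceding frame, so a later Atropos reaches the earlier ones. Consequently the subgraphs $G[a]$ are nested along the consensus order, and the union telescopes into $G[a^{*}]$ for the later of the two Atropos $a^{*} \in \{a_x, a_y\}$. Then both $v_x$ and $v_y$ lie in the single fork-free subgraph $G[a^{*}]$, contradicting $v_x \efork v_y$ and closing the argument.

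As a fallback, in case the nestedness above is only approximate, I would instead invoke the quorum-intersection argument already used in the Fork Detection lemma: since $a_x$ is an Atropos it is confirmed by roots carrying more than $2\mathcal{W}/3$ validating power, and likewise for $a_y$; any two such sets overlap in strictly more than $\mathcal{W}/3$ validating power, which under the $1/3$-BFT assumption must contain an honest node. That honest node would have seen, detected, and removed one of $v_x, v_y$ before the corresponding root was promoted, exactly as in the justification of the Clotho Fork-free proposition, so the fork cannot persist in $G^C$. Finally, consistency of $G^C$ across honest nodes follows from the Global Atropos proposition, which guarantees that all honest nodes agree on the same Atropos set and hence on the same fork-free $G^C$.
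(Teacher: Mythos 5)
Your proposal follows essentially the same route as the paper: decompose $G^C$ into the subgraphs $G[a]$ of the finalized Clothos/Atropos, invoke the preceding Fork-free proposition on each piece, and then argue the union introduces no new fork. Where you differ is that the paper dispatches the union step in one clause (``we can prove by processing the Clothos in topological order from lowest to highest'') while you actually try to prove it, and this is where the comparison is instructive. Your primary argument --- that the subgraphs $G[a]$ are nested along the consensus order because a later Atropos reaches the earlier ones --- is not quite right as stated: a root at frame $i+1$ only reaches roots carrying more than $2\mathcal{W}/3$ of the power at frame $i$, not all of them, so two Atropos in the same frame, or in adjacent frames, need not have nested subgraphs; nestedness only follows (via quorum intersection) when the frame gap is at least two. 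Had you relied on nestedness alone, the same-frame case $a_x \neq a_y$ with $\phi_R(a_x)=\phi_R(a_y)$ would be a genuine hole. However, your fallback closes it: the two confirming quorums of more than $2\mathcal{W}/3$ validating power for $a_x$ and $a_y$ intersect in more than $\mathcal{W}/3$, hence contain an honest node that reaches both $v_x$ and $v_y$ and would have detected and removed the fork before promotion --- which is exactly the mechanism of the paper's Fork Detection lemma and its per-Clotho proposition, now applied across two Atropos. So your proof is correct with the fallback doing the real work, and it is in fact more complete than the paper's own proof, which never addresses forks straddling two distinct Clotho subgraphs at all.
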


Recall that the global consistent chain $G^C$ consists of the finalised events i.e., the Clothos and its subgraphs that get final ordering at consensus. Those finally ordered Clothos are Atropos vertices. From the above proposition, for every Clotho $v_c$, there is no fork in its subgraph  $G[v_c]$. Thus, we can prove by processing the Clothos in topological order from lowest to highest.

\subsection{Topological sort}\label{se:toposort}

After new Clothos are found, the protocol will compute the final ordering for the Clothos as well as the vertices under them.
Once the final total order is computed, finalized vertices are assigned with a consensus time.

We present an approach to ordering event blocks that reach finality.
Algorithm~\ref{algo:topoordering} gives our topological sort algorithm.
The algorithm takes as input the set of new Clothos $C$, the
H-OPERA chain, layering assignment $\phi$, frame assignment $\phi_S$, current ordered list $S$, and set of all ordered blocks $U$. 
Initially, the list $S$ and the set $U$ are empty (line 1-2).

We first order the Clothos vertices using $SortByLayer$ function, which sorts vertices based on their layer, then lamport timestamp and then hash information of the event blocks (line 12-13). Second, the algorithm processes every Clotho in the sorted order (line 5). For each Clotho $c$, the algorithm computes the subgraph $G[c] = (V_c, E_c)$ under $c$ (line 6). The set of vertices $V_u$ contains vertices from $V_c$ that is not yet processed in $U$ (line 7).
We then apply $SortByLayer$ to order vertices in $V_u$ to get an ordered list of events $T$. For each event block $v$ in $T$, it appends $v$ into the final ordered list $S$ and also adds $v$ into the set of already processed vertices $U$.

\begin{algorithm}[H]
	\caption{Topological Sort}\label{algo:topoordering}
	\begin{algorithmic}[1]
		\State $S \leftarrow$ empty list
		\State $U \leftarrow \emptyset$
		
		\Function{TopoSort}{$C$, $H$, $\phi$, $\phi_F$, $S$, $U$}
		\State $Q \leftarrow$ $SortByLayer(C)$
		\For{each Clotho $c \in Q$}
		\State Compute the graph $G[c] = (V_c, E_c)$
		\State $V_u \leftarrow V_c \setminus U$
		\State $T \leftarrow SortByLayer(V_u)$
			\For{$v \in T$}
			\State Append $v$ at the end $S$
			\State $U \leftarrow U \cup \{v \}$
			\EndFor
		\EndFor
		\EndFunction
		\Function{SortByLayer}{$K$}
		\State Sort the vertices in $K$ by layer, Lamport timestamp and hash in that order.
		\EndFunction
	\end{algorithmic}
\end{algorithm}

After the final order of event blocks are computed using the above algorithm, we can assign the consensus time to the finally ordered event blocks.
We can prove that the algorithm will sort and assign the final ordering index for each event block once. A proof by induction is fairly straight-forward.

\subsection{Transaction confirmations}

Here are some steps for a transaction to reach finality in our system, similar to our previous paper~\cite{onlay19}.
In a successful scenario, five confirmations will be issued and the fifth receipt is the final confirmation of a successful transaction.
First, after a client submits a transaction, s/he will be issued a confirmation receipt of the submitted transaction. 
Second, the node will batch the submitted transaction(s) into a new event block added into the node's DAG and it will broadcast the event block to all other nodes of the system. A receipt will be then issued to confirm the containing event block identifier is being processed. 
Third, when the event block receives the majority of the validating power (e.g., it becomes a Root block), or being known by such a Root block, a confirmation will be issued to acknowledge the block has reached that 2/3 validating power. 
Fourth, when the Root event block becomes a Clotho. A confirmation will be sent to the client showing that the event block has come to the semi-final stage as a Clotho or being confirmed by a Clotho.
Fifth, after the Clotho stage, we will determine the consensus timestamp for the Clotho and its dependent event blocks. Once an event block gets the final consensus timestamp, it is finalized and a final confirmation will be issued to the client that the transaction has been successfully finalized. 

In some cases, a submitted transaction can fail to reach finality. For example, a transaction does not pass the validation due to insufficient account balance, or violation of account rules. The other kind of failure is when the integrity of DAG structure and event blocks is not complied due to the existence of compromised or faulty nodes. In such unsuccessful cases, the event blocks are marked for removal and detected issues are notified to all nodes. Receipts of the failure will be sent to the client.

\newpage
\section{Discussions}\label{se:discuss}

This section presents several important discussions about PoW and PoS previous work. There has been extensive research in the protocol fairness and security aspects of existing PoW and PoS blockchains (see surveys~\cite{sheikh2018proof, panarello2018survey}). Then we also highlight the benefits of our proposed DAG-based PoS approach.

\subsection{Protocol Fairness}

Here, the common concerns about fairness in PoW and PoS in previous protocols are given as follows:
\begin{itemize}
	\item PoW protocol is fair in the sense that a miner with $p$ fraction of the total computational power can win the reward and create a block with the probability $p$.
	\item PoS protocol is fair given that an individual node who has $p$ fraction of the total number of coins in circulation creates a new block with $p$ probability. In a PoS system, there is always a concern that the initial holders of coins will not have an incentive to release their coins to third parties, as the coin balance directly contributes to their wealth.
\end{itemize}

Our \stakedag\ protocol is fair because every node has an equal chance to create an event block. A protocol in $\mathfrak{S}$ family allows node to enter the network without the need to equip an expensive and high-spec hardward like in PoW. Further, any node in $\mathfrak{S}$ protocol can create a new event block with the same propability, unlike stake-based probability of block creation in POS blockchains.

Like a PoS blockchain system, it is a possible concern that the initial holders of coins will not have an incentive to release their coins to third parties, as the coin balance directly contributes to their wealth. Unlike PoS, that concern in \stakedag\ protocol is about the economic rewards a \stakedag\ node may get is proportional to the stake they possess after they successfully contribute to the validation of event blocks.

Remarkably, our \stakedag\ protocol is more intuitive because our reward model used in stake-based validation can lead to a more reliable and sustainable network.

\subsection{Validation Procedure}
For validation in \stakedag, all the validators will be weighted by the tokens in their deposit. A block reaches consensus when it gains more than 2/3 of the network stake. That means, it is agreed by a set of validators whose stakes are greater than 2/3 of the total network stake.

Our DAG-based PoS approach makes some improvements in validation procedure over the previous PoS approaches. In particular, \stakedag\ protocol utilizes the following validation procedure, which is based on the Casper PoS model~\cite{buterin2018}. Our approach can be summarised as follows:
\begin{itemize}
	\item  Each block is 1-20 seconds and each frame is 1-10 minutes. Every 20th frame is a checkpoint. Stakeholders can choose to make more deposits at each check point, if they want to become validators and earn more rewards. Validators can choose to exit, but cannot withdraw their deposits until three months later.
	\item With asynchronous system model, frames and accounts may be not synchronized and validators are incentivized to coordinate on which checkpoints the history should be updated. This coordination is carried out by broadcasting the latest local views amongst nodes.
	\item A checkpoint is selected based on a consistent global history that is finalized with more than 2/3 of the validating power for the checkpoint. When a checkpoint is finalized, the transactions will not be reverted. Attackers can attempt double voting to target double spending attachs. Honest validators are incentivized to report such behaviors and burn the deposits of the attackers.
\end{itemize}

\subsection{Security}

PoS approach reduces the energy demand, compared to PoW. PoS is considered as more secure than PoW.
This section gives some security analysis of PoW, PoS and our \stakedag\ protocol.

{\bf Overall Comparison} As for a guidelines, we present a summary that gives the effects of the common types of attack  on previous protocols.

\begin{table}[h]
	\centering
	\begin{tabular}{|l|c|c|c|c|}
		\hline
		Attack type & PoW & PoS & DPoS & \stakedag \\
		\hline
		Short range attack (e.g., bribe) & - & + & - & - \\
		Long range attack  & - & + & + & maybe \\
		Coin age accummulation attack & - & maybe & - & maybe \\
		Precomputing attack & - & + & - & - \\
		Denial of service & + & + & + & + \\
		Sybil attack & + & + & + & maybe \\
		Selfish mining & maybe & - & - & - \\
		\hline
	\end{tabular}
	\vspace{5pt}
	\caption{Comparison of Vulnerability Between PoW, PoS, DPoS and \stakedag\ Protols}\label{tab:vulnerability}
\end{table}
Table~\ref{tab:vulnerability} gives a comparison between PoW, PoS, DPoS and our \stakedag\ protocol. Generally speaking, \stakedag\ has less vulnerabilities than PoW, PoS and DPoS.

{\bf PoW vulnerabilities}
PoW-based systems are facing \emph{selfish mining attack}. In selfish mining, an attacker selectively reveals mined blocks in an attempt to waste computational resources of honest miners.

{\bf PoS vulnerabilities}
PoS has encountered new issues arise that were not present in PoW-based blockchains. These issues are: (1) \emph{Grinding attack:} malicious nodes can play their bias in the election process to gain more rewards or to double spend their money: (2) \emph{Nothing at stake attack:} A malicious node can mine on an alternative chain in PoS at no cost, whereas it would lose CPU time if working on an alternative chain in PoW.

{\bf Shared vulnerabilities in PoW and PoS} There are several vulnerabilities that are encountered by both PoW and PoS. \emph{DoS attack} and \emph{Sybil attack} are shared 
vulnerables that PoW are found more vulnerable. A DoS attack disrupts the network by flooding the nodes. In a Sybil attack, the attacker creates numerous faulty nodes to disrupt the network.

For another shared vulnerable \emph{Bribe attack}, PoS is more vulnerable because a PoS Bribe attack costs 50x lower than PoW Bribe attack. In bribing, the attacker performs a spending transaction, and at the same time builds an alternative chain secretely, based on the block prior to the one containing the transaction. After the transaction gains the necessary number of confirmations, the attacker publishes his chain as the new valid blockchain, and the transaction is reversed.

{\bf PoS potential attacks}
We then discuss two scenarios that are possible attacks in PoS. Both of these attacks can induce conflicting finalized checkpoints that will require offline coordination by honest users.

\emph{Double-Spending}
An attacker (a) acquires $2F/3$ of stakes; (b) submits a transaction to spend some amount and then votes to finalize a checkpoint (Atropos) that includes the transactions; (c) sends another transaction to double-spends; (d) gets caught and his stakes are burned as honest validators are incentivezed to report such misbehavior. 
In another scenario, an attacker acquires (a) $F/3+\epsilon$ to attempt for an attack and suppose Blue validators own $F/3-\epsilon/2$, and Red validators own the rest $F/3-\epsilon/2$. The attacker can (b) vote for the transaction with Blue validators, and then(c) vote for the conflicting transaction with the Red validations. Then both transactions will be finalized because they have $2F/3 +\epsilon/2$ votes. Blue and Red validators may later see the finalized checkpoint, approve the transaction, but only one of them will get paid eventually.

\emph{Sabotage (going offline)}  An attacker owning $F/3 +\epsilon$ of the stakes can appear offline by not voting and hence checkpoints and transactions cannot be finalized. Users are expected to coordinate outside of the network to censor the malicious validators.

In \stakedag, the protocol relies on every honest validators to detect and report such attacks. The attack, once detected, will cause a loss of tokens to the attacker. In fact, our \stakedag\ protocol guarantees a 1/3-BFT in which no more than $F/3$ tokens are owned by misbehaving nodes.
We have provided our proof of 1/3-BFT in previous section and also more details are given in the Appendix. 

Specially, let us consider the case an attacker in \stakedag\ network with $F/3 + \epsilon$. In Double-Spending, s/he can manage to achieve a root block $r_b$ with Blue validators and a conflict root block $r_r$ with Red validators. However, in order for either of the two event blocks to be finalized (becoming Clotho and then Atropos), each of root blocks need to be confirmed by two roots of next levels. Since peers always share their event blocks, an attacker cannot stop the Blue and Red validators (honest) to share event blocks to each other. Therefore, there will exist an honest validator from Blue or Red groups, who detects conflicting event blocks in the next level roots. Because honest validators are incentivezed to find out and report such wrongdoing, the attacker will get caught and his stakes are burned. Similarly, for Sabotage attack, the attacker may refuse to vote for a while. But honest validators will find out the absence of those high stake validators after a number of computed layers.

{\bf Comparison of attack cost in PoS versus PoW} It will cost more for an attack in PoS blockchain due to the scarity of the native token than in a PoW blockchain.
In order to gain more stake for an attack in PoS, it will cost a lot for an outside attacker. 
S/he will need $2F/3$ (or $F/3$ for certain attacks) tokens, where $F$ is the total number of tokens, regardless of the token price. Acquiring more tokens will definitely increase its price, leading to a massive cost. Another challenge is that all the tokens of a detected attempt will be burned.
In contrast, PoW has no mechanism nor enforcement to prevent an attacker from reattempting another attack. An attacker may purchase or rent the hash power again for his next attempts.

Like PoS, our \stakedag\ protocol employs  the PoS mechanism to effectively prevent potential attacks. Attackers will need to acquire $2F/3$ tokens (or at least $F/3$ for certain attacks) to fully influence the validation process. Any attempt that is detected by peers will void the attacker's deposit.

\section{Conclusion}\label{se:con}
In this paper, we introduce a new set of protocols, namely \stakedag, for a scalable asynchronous distributed system with practical BFT. We propose a new family of consensus protocols $\mathfrak{S}$ that uses Proof of Stake to achieve more scalable and robust consensus in a DAG.
Further, we present a model of staking used for our \stakedag\ framework. Weight models and staking choices have been described.

We then introduce a specific consensus protocol, called $S_\phi$, to address a more reliable consensus compared to predecedent DAG-based approaches.
The new consensus protocol $S_\phi$ uses the well-known concept of layering of DAG, like in our $L_\phi$\cite{onlay19}, to achieve a deterministic consensus on the S-OPERA chain. By using Proof of Stake, $S_\phi$ can improve the scalability, sustainability than previous DAG-based approaches.

We have included formal definitions and semantics for our general model of \stakedag.
Our formal proof of pBFT for our \stakedag\ protocol is given in the Appendix. Our work extends the formal foundation established in our previous paper~\cite{fantom18}, which is the first that studies concurrent common knowledge sematics~\cite{cck92} in DAG-based protocols. Formal proofs for our layering-based $S_\phi$ protocol is also presented.




\clearpage
\section{Reference}\label{se:ref}

\renewcommand\refname{\vskip -1cm}
\bibliographystyle{unsrt}
\bibliography{LCA}

\newpage
\section{Appendix}\label{se:appendix}

This section gives further details about the \stakedag\ protocol. We present the formal semantics of  $S_\phi$ using the concurrent common knowledge that can be applied to a generic model of DAG-based PoS approaches, and then show the proofs for consensus of the protocol.

\subsection{Formal definitions}\label{se:prelim}

\subsubsection{Node State}
A node is a machine participating in the $\mathfrak{S}$ protocol.
Each node has a local state consisting of local histories, messages, event blocks, and peer information.  Let $n_i$ denote the node with the identifier of $i$. Let $n$ denote the total number of nodes.

\dfnn{State}{A (local) state of $i$ is denoted by $s_j^i$ consisting of a sequence of event blocks $s_j^i$=$v_0^i$, $v_1^i$, $\dots$, $v_j^i$.}

In a DAG-based protocol, each  event block $v_j^i$ is \emph{valid} only if the reference blocks exist before it. A local state $s_j^i$ is corresponding to a unique DAG. In \stakedag, we simply denote the $j$-th local state of a node $i$ by the DAG $g_j^i$. Let $G_i$ denote the current local DAG of a process $i$.

An action is a function from one local state to another local state. An action can be either: a $send(m)$ action of a message $m$, a $receive(m)$ action, and an internal action. A message $m$ is a triple $\langle i,j,B \rangle$ where the sender $i$, the message recipient $j$, and the message body $B$. 
In \stakedag, $B$ consists of the content of an event block $v$. Let $M$ denote the set of messages.
Semantics-wise, there are two actions that can change a process's local state: creating a new event and receiving an event from another process.\\

\dfnn{Event}{An event is a tuple $\langle  s,\alpha,s' \rangle$ consisting of a state, an action, and a state.}

Sometimes, the event can be represented by the end state $s'$. 
The $j$-th event in history $h_i$ of process $i$ is $\langle  s_{j-1}^i,\alpha,s_j^i \rangle$, denoted by $v_j^i$.

\dfnn{Local history}{A local history $h_i$ of $i$ is a sequence of local states starting with an initial state. A set $H_i$ of possible local histories for each process $i$.}

A process's state can be obtained from its initial state and the sequence of actions or events that have occurred up to the current state. \stakedag\ protocol uses append-only semantics. The local history may be equivalently described as either of the following: (1)
$h_i$ = $s_0^i$,$\alpha_1^i$,$\alpha_2^i$, $\alpha_3^i$ $\dots$, (2)
$h_i$ = $s_0^i$, $v_1^i$,$v_2^i$, $v_3^i$ $\dots$, (3)
$h_i$ = $s_0^i$, $s_1^i$, $s_2^i$, $s_3^i$, $\dots$.
In \stakedag, a local history is equivalently expressed as:
$h_i$ = $g_0^i$, $g_1^i$, $g_2^i$, $g_3^i$, $\dots$
where $g_j^i$ is the $j$-th local DAG (local state) of the process $i$.

\dfnn{Run}{Each asynchronous run is a vector of local histories. Denoted by
	$\sigma$ = $\langle h_1,h_2,h_3,...h_N \rangle$.}

Let $\Sigma$ denote the set of asynchronous runs. A global state of run $\sigma$ is an $n$-vector of prefixes of local histories of $\sigma$, one prefix per process. The happens-before relation can be used to define a consistent global state, often termed a consistent cut, as follows.

\subsubsection{Lamport timestamps and Ordering}

Our \stakedag\ protocol relies on Lamport timestamps to define a topological ordering of event blocks.
The ``happened before" relation, denoted by $\rightarrow$, gives a partial ordering of event blocks in a distributed system.
For a pair of $v_i$ and $v_j$, then $v_i$ $\rightarrow$ $v_j$ if : (1) $v_i$ and $v_j$ are events of the same node $p_i$, and $v_i$ comes before $v_j$, (2) $v_i$ is the send($m$) by one process and $v_j$ is the receive($m$) by another process (3) $v_i$ $\rightarrow$ $v_k$ and $v_k$ $\rightarrow$ $v_j$ for some $v_k$. 

\dfnn{Happened-Im-Before}{An event block $v_x$ is said Happened-Immediate-Before an event block $v_y$ if $v_x$ is a (self-) ref of $v_y$. Denoted by $v_x$ $\hibefore$ $v_y$.}

\dfnn{Happened-before}{An event block $v_x$ is said Happened-Before an event block $v_y$ if $v_x$ is a (self-) ancestor of $v_y$. Denoted by $v_x$ $\hbefore$ $v_y$.}

Happened-before relation is the transitive closure of happens-immediately-before.
Two event blocks $v_x$ and $v_y$ are said \emph{concurrent}, denoted by $v_x \concur v_y$, if neither of them  happened before the other. Two distinct events $v$ and $v'$ are said to be concurrent if $v \nrightarrow v'$ and $v' \nrightarrow v$.
Given two vertices $v_x$ and $v_y$ both contained in two OPERA chains (DAGs) $G_1$ and $G_2$ on two nodes. We have the following:
(1) $v_x$ $\hbefore$ $v_y$ in $G_1$ if $v_x$ $\hbefore$ $v_y$ in $G_2$;  (2) $v_x$ $\concur$ $v_y$ in $G_1$ if $v_x$ $\concur$ $v_y$ in $G_2$.

\dfnn{Total ordering}{
	Let $\prec$ denote an arbitrary total ordering  of the nodes (processes) $p_i$ and $p_j$. 	\emph{Total ordering} is a relation $\Rightarrow$ satisfying the following: for any event $v_i$ in $p_i$ and any event $v_j$ in $p_j$, $v_i \Rightarrow v_j$ if and only if either (i) $C_i(v_i) < C_j(v_j)$ or (ii) $C_i(v_i)$=$C_j(v_j)$ and $p_i \prec p_j$.}

This defines a total ordering relation. The Clock Condition implies that if $v_i \rightarrow v_j$ then $v_i \Rightarrow v_j$. 

\subsubsection{Consistent Cut}\label{sec:cck} 

An asynchronous system consists of the following sets: a set $P$ of process identifiers, a set $C$ of channels, a set $H_i$ of possible local histories for each process $i$, a set $A$ of asynchronous runs, a set $M$ of all messages.
Consistent cuts represent the concept of scalar time in distributed computation, it is possible to distinguish between a ``before'' and an ``after'', see CCK paper~\cite{cck92}. 

Consistent cut model of \stakedag\ protocol is based on the model in our previous work~\cite{fantom18,onlay19}. More details can be found in our previous papers.

\subsection{DAG and S-OPERA chain}

Each node can create event blocks, send (receive) messages to (from) other nodes. 
Each event block has exactly $k$ references: a self-ref reference, and $k$-1 other-ref references pointing to the top events of $k$-1 peer nodes.
Each node stores the current local state in a form of a directed acyclic graph $G$=($V$, $E$), where $V$ is a set of vertices and $E$ is a set of edges.

\dfnn{Event block}{An event block is a holder of a set of transactions. An event block includes the signature, generation time, transaction history, and references to previous event blocks.}

\dfnn{Top event}{An event $v$ is a top event of a node $n_i$ if there is no other event in $n_i$ referencing $v$.}

\dfnn{Ref}{An event $v_r$ is called ``ref" of event $v_c$ if the reference hash of $v_c$ points to the event $v_r$. Denoted by $v_c \eref v_r$. For simplicity, we can use $\erefz$ to denote a reference relationship (either $\eref$ or $\eself$).}

\dfnn{Self-ref}{An event $v_s$ is called ``self-ref" of event $v_c$, if the self-ref hash of $v_c$ points to the event $v_s$. Denoted by $v_c \eself v_s$.}

An event block $v_a$ is \emph{self-ancestor} of an event block $v_c$ if there is a sequence of events such that $v_c \eself v_1 \eself \dots \eself v_m \eself v_a $. Denoted by $v_c \eselfancestor v_a$.
An event block $v_a$ is an \emph{ancestor} of an event block $v_c$ if there is a sequence of events such that $v_c \erefz v_1 \erefz \dots \erefz v_m \erefz v_a $. Denoted by $v_c \eancestor v_a$.
For simplicity, we simply use $v_c \eancestor v_s$ to refer both ancestor and self-ancestor relationship, unless we need to distinguish the two cases.

{\bf S-OPERA chain:}
An S-OPERA chain is the local view of the DAG $G$=($V$,$E$). Each vertex $v_i \in V$ is an event block. Each block has a weight, which is the \emph{validation score}. An edge ($v_i$,$v_j$) $\in E$ refers to a hashing reference from $v_i$ to $v_j$; that is, $v_i \erefz v_j$.

\dfnn{S-OPERA chain}{In \stakedag, a S-OPERA chain is weighted DAG stored on each node.}

\dfnn{Leaf}{The first created event block of a node is called a leaf event block.}

\dfnn{Root}{An event block $v$ is a root if either (1) it is the leaf event block of a node, or (2) $v$ can reach more than $2W/3$ validating power from previous roots.}

\dfnn{Root set}{The set of all first event blocks (leaf events) of all nodes form the first root set $R_1$ ($|R_1|$ = $n$). The root set $R_k$ consists of all roots $r_i$ such that $r_i$ $\not \in $ $R_i$, $\forall$ $i$ = 1..($k$-1) and $r_i$ can reach more than $2W/3$ validating power from other roots in the current frame, $i$ = 1..($k$-1).}

\dfnn{Frame}{Frame $f_i$ is a natural number that separates Root sets. The root set at frame $f_i$ is denoted by $R_i$.}

\dfnn{Creator}{If a node $n_i$ creates an event block $v$, then the creator of $v$, denoted by $cr(v)$, is $n_i$.}

\dfnn{Clotho}{A root $r_k$ in the frame $f_{a+3}$ can nominate a root $r_a$ as Clotho if more than 2n/3 roots in the frame $f_{a+1}$ dominate $r_a$ and $r_k$ dominates the roots in the frame $f_{a+1}$.}

\dfnn{Atropos}{An Atropos is a Clotho that is decided as final.}

Event blocks in the subgraph rooted at the Atropos are also final events. Atropos blocks form a Main-chain, which allows time consensus ordering and responses to attacks.

\subsection{Layering}\label{se:layering}

For a directed acyclic graph $G$=($V$,$E$), a layering is to assign a layer number to each vertex in $G$.

\dfnn{Layering}{A layering (or levelling) of $G$ is a topological numbering $\phi$ of $G$, $\phi: V \rightarrow Z$,  mapping the  set  of  vertices $V$ of $G$ to  integers  such  that $\phi(v)$ $\geq$ $\phi(u)$ + 1 for every directed edge ($u$, $v$) $\in E$.}

\dfnn{Hierarchical graph}{
	For a layering $\phi$, 
	the produced graph $H$=($V$,$E$,$\phi$) is a \emph{hierarchical graph}.}

H is also called an $l$-layered directed graph and could be represented as
$H$=($V_1$,$V_2$,$\dots$,$V_l$;$E$).
Let $V^{+}(v)$ denote the outgoing neighbours of $v$; $V^{+}(v)$ =$\{u \in V | (v,u) \in E\}$. Let $V^{-}(v)$ denote the incoming neighbours of $v$; $V^{-}(v)$=$\{ u \in V | (u,v) \in E \}$.
The height of $H$ is $l$. The width of $H$ is the number of vertices in the longest layer; that is, $max_{1 \leq i \leq l} |V_i|$.

There are several approaches to DAG layering. The two most common ones are given as follows.
\emph{Longest path layering} is a list scheduling algorithm produces hierarchical graph with the smallest possible height~\cite{bang2008digraphs,Sedgewick2011}. For minimum height, it places all source vertices in the first layer $V_1$. The layer $\phi(v)$ for every remaining vertex $v$ is recursively defined by $\phi(v)$ = $max\{\phi(u) | (u, v) \in E\}$ + 1.  
By using a topological ordering of the vertices~\cite{Mehlhorn84a}, the algorithm can be implemented in linear time $O$($|V|$+$|E|$).
\emph{Coffman-Graham} (CG) algorithm, which considers a layering with a maximum width~\cite{Coffman1972}. It is currently the most commonly used layering method. Lam and Sethi~\cite{Spinrad1985} showed that the number of layers $l$ of the computed layering with width $w$ is bounded by $l \leq (2-2/w).l_{opt}$, where $l_{opt}$ is the minimum height of all layerings with  width $w$. So, CG algorithm is an exact algorithm for $w \leq 2$. In certain appplications, the notion of width does not consider dummy vertices.

{\bf Online Layering in \stakedag:}
Algorithm~\ref{algo:onlinelayering} presents the Online layering algorithms. Online Longest Path Layering (O-LPL) algorithm takes as input the following information: $U$ is the set of processed vertices, $Z$ is the set of already layered vertices, $l$ is the current height (maximum layer number), $V'$ is the set of new vertices, and $E'$ is the set of new edges.
The algorithms has a worst case time complexity of $O$($|V'|$+ $|E'|$).

\begin{algorithm}
	\footnotesize
	\caption{Online Layering Algorithms}\label{algo:onlinelayering}
	\begin{minipage}{0.45\textwidth}
	\begin{algorithmic}[1]
		\State Require:A DAG $G$=($V$,$E$) and $G'$=($V'$,$E'$)
		\Function{OnlineLPL}{$U$, $Z$, $l$, $V'$, $E'$}
		\State $V_{new}$ $\gets$ $V \cup V'$, $E_{new}$ $\gets$ $E \cup E'$
		\While{$U \neq V_{new}$}
		\State Select $v \in  V_{new} \setminus U $ with $V_{new}^{+}(v) \subseteq Z$
		\If{$v$ has been selected}	
		\State $l' \gets max\{\phi(u) | u \in V^+_{new}(v) \}$ +1	
		\State $\phi(v)$ $\gets l'$ 
		\State $U \gets U \cup \{v\}$
		\EndIf
		\If{no vertex has been selected}		
		\State $l$ $\gets$  $l$ + 1
		\State $Z \gets Z \cup U$
		\EndIf
		\EndWhile
		\EndFunction
	\end{algorithmic}
	\end{minipage}
\begin{minipage}{0.55\textwidth}
	\begin{algorithmic}[1]
		\State Require:A reduced DAG $G=(V,E)$ and $G'=(V',E')$
		\Function{OnlineCoffmanGraham}{$W$, $U$, $l$, $\phi$, $V'$, $E'$}
		\ForAll{$v \in V'$}
		$\lambda(v) \gets \infty$
		\EndFor
		\For{$i$ $\gets$ 1 to $|V'|$}
		\State Choose $v \in V'$ with $\lambda(v) = \infty$ s.t. $|V^{-}(v)|$ is min
		\State $\lambda(v) \gets 1$
		\EndFor
		\State $V_{new} \gets V \cup V'$; $E_{new} \gets E \cup E'$
		\While{$U \neq V_{new}$}
		\State Choose $v \in V_{new} \setminus U$ s.t. $V_{new}^{+}(v) \subseteq U$ and $\lambda(v)$ is max
		\State $l' \gets max\{\phi(u) | u \in V^+_{new}(v) \}$ +1			
		\If{$|\phi_{l'}| \leq W$ and $V_{new}^{+}(v) \subseteq \phi_1 \cup \phi_2 \cup \dots \cup \phi_{l'-1}$}
		\State $\phi_{l'} \gets \phi_{l'} \cup \{v\}$
		\Else
		\State	$l$ $\gets$ $l$+1
		\State  $\phi_l \gets \{v\}$
		\EndIf
		\State $U \gets U \cup \{v\}$
		\EndWhile
		\EndFunction
	\end{algorithmic}
\end{minipage}
\end{algorithm}

Online Coffman-Graham (O-CG) Algorithm takes as input the following: $W$ is the fixed maximum width, $U$ is the unprocessed vertices, $l$ is the current height (maximum layer), $\phi$ is the layering assignment, $V'$ is the set of new vertices, $E'$ is the set of new edges. The algorithms has a worst case time complexity of $O(|V'| ^2)$.
The algorithm assumes that the given value of width $W$ is sufficiently large. Section~\ref{sec:layerwidth} gives some discussions about choosing an appropriate value of $W$.

\subsection{Semantics and Proofs of Consensus}

$S_{\phi}$ protocol uses several novel concepts such as S-OPERA chain, H-OPERA chain, Root graph and Frame assignment to achieve deterministic consensus in PoS DAG.
We now present our formal model for the consistency of knowledge across the distributed network of nodes. We use the consistent cut model, based on our ONLAY paper~\cite{onlay19}. We generalize the concurrent common knowledge in PoS DAG-based protocol, for details see the original CCK paper~\cite{cck92}.

For an S-OPERA chain $G$, let $G[v]$ denote the subgraph of $G$ that contains nodes and edges reachable from $v$.

\dfnn{Consistent chains}{Two chains $G_1$ and $G_2$ are consistent, denoted by $G_1 \sim G_2$, if for any event $v$ contained in both chains, $G_1[v] = G_2[v]$.}

For any block $v$, a node must already have the $k$ references of $v$ in order to accept $v$.
For any two nodes, suppose their S-OPERA chains contain the same event $v$. Thus, both S-OPERA chains must contain $k$ references of $v$. Presumably, the cryptographic hashes are  always secure and references must be identical between nodes. By induction, all ancestors of $v$ must be the same. Hence, the two consistent chains must contain the same set of ancestors for $v$, with the same reference and self-ref edges between those ancestors. Consequently, the two OPERA chains are consistent.\\

\dfnn{Global S-OPERA chain}{
	A global consistent chain $G^C$ is a chain such that $G^C \sim G_i$ for all $G_i$.}

Let $G \sqsubseteq G'$ denote that $G$ is a subgraph of $G'$. Some properties of $G^C$ are given as follows: (1) $\forall G_i$ ($G^C \sqsubseteq G_i$); (2)
$\forall v \in G^C$ $\forall G_i$ ($G^C[v] = G_i[v]$); (3)
($\forall v_c \in G^C$) ($\forall v_p \in G_i$) (($v_p \hbefore v_c) \Rightarrow v_p \in G^C$).

The layering of consistent S-OPERA chains is consistent itself.

\dfnn{Consistent layering}{
	For any two consistent OPERA chains $G_1$ and $G_2$,  layering results $\phi^{G_1}$ and $\phi^{G_2}$ are consistent, denoted by $\phi^{G_1} \sim \phi^{G_2}$, if $\phi^{G_1}(v) = \phi^{G_2}(v)$, for any  vertex $v$ common to both chains.}

\begin{thm}	
	For two consistent OPERA chains $G_1$ and $G_2$, the resulting H-OPERA chains using layering $\phi_{LPL}$ are consistent.
\end{thm}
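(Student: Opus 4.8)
The plan is to show that the longest-path layer of a vertex is a purely \emph{local} quantity: it is determined entirely by the subgraph $G[v]$ reachable from $v$, and never by anything sitting ``above'' $v$. Once this locality is established the theorem is immediate, since for any $v$ common to $G_1$ and $G_2$ the hypothesis $G_1 \sim G_2$ \emph{gives} us $G_1[v] = G_2[v]$.

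First I would recall the recursion defining $\phi_{LPL}$ in the form actually used by \textsc{OnlineLPL}: writing $V^{+}(v)$ for the out-neighbours of $v$ (its self-ref and other-ref parents), a leaf has $V^{+}(v) = \emptyset$ and receives layer $1$, while every other vertex receives
\[
\phi_{LPL}(v) = 1 + \max\{\phi_{LPL}(u) \mid u \in V^{+}(v)\}.
\]
The essential observation is that this recursion only ever consults out-neighbours of $v$, i.e. references/ancestors of $v$, all of which lie in $G[v]$; it never consults a descendant of $v$.

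The key lemma I would then prove is: for every vertex $v$, the value $\phi_{LPL}(v)$ depends only on $G[v]$; equivalently, $\phi_{LPL}(v)$ equals the number of vertices on the longest directed reference-chain from $v$ down to a leaf, a path lying wholly inside $G[v]$. I would argue this by induction on the length of that longest chain (equivalently, by strong induction on $|G[v]|$). In the base case $V^{+}(v) = \emptyset$, so $v$ is a leaf with $\phi_{LPL}(v) = 1$, and leaf-ness is itself visible in $G[v]$ (namely $G[v] = \{v\}$). For the inductive step, the parent set $V^{+}(v)$ is part of $G[v]$, and for each $u \in V^{+}(v)$ we have $G[u] \subseteq G[v]$; by the induction hypothesis each $\phi_{LPL}(u)$ is determined by $G[u]$, hence by $G[v]$, and so is their maximum, whence $\phi_{LPL}(v)$ is determined by $G[v]$. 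Assembling the theorem is then routine: for $v$ common to both chains, consistency gives $G_1[v] = G_2[v]$, the lemma expresses $\phi^{G_1}_{LPL}(v)$ and $\phi^{G_2}_{LPL}(v)$ as the \emph{same} function of these coinciding subgraphs, so they are equal; as $v$ was an arbitrary common vertex, $\phi^{G_1}_{LPL} \sim \phi^{G_2}_{LPL}$.

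The step I expect to be the main obstacle is the locality lemma, and specifically pinning down the \emph{direction} of the recursion: one must verify that \textsc{OnlineLPL} assigns the layer of $v$ from its out-neighbours $V^{+}(v)$ (the references/ancestors) and never from in-neighbours, so that every vertex consulted during the longest-path computation for $v$ already lies in $G[v]$. Were the layer of $v$ permitted to depend on vertices that reference $v$ (its descendants), two consistent chains differing only in what sits above $v$ could assign $v$ different layers and the theorem would fail; the whole argument therefore rests on confirming that $G[v]$ is a self-contained witness for the computation of $\phi_{LPL}(v)$.
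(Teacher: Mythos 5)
Your proposal is correct and follows essentially the same route as the paper's own proof: the paper argues that $G_1 \sim G_2$ gives $G_1[v] = G_2[v]$, hence the height of $v$ (the longest reference-chain within $G[v]$) agrees in both chains, hence so does the $\phi_{LPL}$ layer. Your ``locality lemma'' with its induction is just a careful formalization of the paper's implicit step that the LPL layer of $v$ is exactly this height and thus depends only on $G[v]$.
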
 

The theorem states that for any event $v$ contained in both OPERA chains, $\phi_{LPL}^{G_1}(v) = \phi_{LPL}^{G_2}(v)$. Since $G_1 \sim G_2$, we have $G_1[v]= G_2[v]$. Thus, the height of $v$ is the same in both $G_1$ and $G_2$. Thus, the assigned layer using $\phi_{LPL}$ is the same for $v$ in both chains.

\begin{prop}[Consistent root graphs]
	Two root graphs $G_R$ and $G'_R$ from two consistent H-OPERA chains are consistent.
\end{prop}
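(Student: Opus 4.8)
The plan is to follow the pattern already used for consistent chains and consistent layering: I would first fix the meaning of consistency for root graphs by saying that $G_R \sim G'_R$ iff for every root $v$ common to both, the reachable subgraphs agree, i.e. $G_R[v] = G'_R[v]$. The entire argument then rests on one observation: whether a common event block $v$ is a root, which active roots it reaches, and the layers of those roots are all functions of the subgraph $G[v]$ alone, and by hypothesis $G_1[v] = G_2[v]$ for the two consistent H-OPERA chains, while $\phi_{LPL}$ is consistent by the preceding theorem. So the goal reduces to showing that the two runs of \textsc{buildRootGraph} (Algorithm~\ref{al:buildrootgraph}) produce the same vertex set and edge set when restricted to common events.

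First I would establish vertex agreement by induction on the layer index $i$, mirroring the outer loop of the algorithm. In the base case the active set $R$ is the $n$ leaf events, which are fixed by node identity and hence identical in both chains. For the inductive step, assume the active root sets agree on common events through layer $i$. A vertex $v$ at layer $i$ becomes a root exactly when the total weight $w_S$ of the roots in $R$ that $v$ reaches exceeds $2\mathcal{W}/3$. Since reachability from $v$ is decided inside $G[v]$ and $G_1[v] = G_2[v]$, the reached set $S$ is the same in both runs; because validating powers are a global, stake-derived quantity, the threshold test returns the same verdict. Hence the newly promoted roots, and the subsequent replacement of stale entries in $R$, coincide on common events, closing the induction.

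Next I would handle the edges. An edge $(v, r_i)$ is inserted precisely when $r_i \in S$, so edge agreement is immediate from the agreement of $S$ above. The one subtle point is the tie-breaking rule: when $v$ reaches two roots $r_1, r_2$ of the same node, the algorithm keeps only the edge to the one of higher layer, $\phi(r_1) > \phi(r_2)$. Here I would invoke the already-proved consistency of $\phi_{LPL}$ across $G_1$ and $G_2$, so the comparison of $\phi(r_1)$ against $\phi(r_2)$ yields the same outcome in both runs and the same edge is retained. Combining vertex and edge agreement gives $G_R[v] = G'_R[v]$ for every common root $v$, which is exactly $G_R \sim G'_R$.

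I expect the main obstacle to be making the induction on the active set $R$ airtight, since $R$ is a moving window of $n$ roots drawn from different layers and the threshold test at layer $i+1$ reads all of it. The clean way is to carry the invariant that \emph{the restriction of $R$ to events common to both chains is identical} through the replacement step, together with the observation that any root present in only one chain must lie outside the shared subgraph: if a common vertex $v$ reached such an $r$ along a path in $G_1$, then $r \in G_1[v] = G_2[v]$ would force $r$ into the other chain as well. Hence non-common roots can never influence the verdict for a common vertex, and the induction goes through.
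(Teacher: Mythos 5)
Your reduction of everything to ``the verdict for a common vertex depends only on $G[v]$'' is exactly where the argument breaks, and the invariant you add at the end does not survive the replacement step of Algorithm~\ref{al:buildrootgraph}. Your observation rules out one way a non-common root could interfere --- it cannot appear in the reached set $S$ of a common vertex, since anything $v$ reaches lies in $G_1[v]=G_2[v]$ --- but it misses the other way: a non-common root can \emph{evict} a common root from the active set $R$. Concretely, suppose creator $c$ has a root $r$ present in both chains, and a later event $r'$ of $c$ that exists only in $G_1$ and is promoted to a root there at some layer $\leq i$. After the replacement loop, $R_1$ holds $r'$ as creator $c$'s entry while $R_2$ still holds $r$. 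Now take a common vertex $v$ at layer $i+1$ that reaches $r$ but not $r'$ (nothing forces $v$ to reach $r'$, and $r'\notin G_1[v]$ is exactly what consistency permits). In the run on $G_2$, creator $c$ contributes $w_c$ to $w_S$; in the run on $G_1$ it contributes nothing, because the only creator-$c$ root in $R_1$ is $r'$, which $v$ does not reach. So $S_1 \neq S_2$ even though $G_1[v]=G_2[v]$: the threshold verdicts can differ, and even when both runs promote $v$, the edge $(v,r)$ appears in only one root graph. This simultaneously falsifies your claim that the replacements ``coincide on common events'' and your stated invariant that the restriction of $R$ to common events is preserved; the induction has a genuine hole at precisely the step you flagged as the main obstacle.

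For comparison: the paper states this proposition without any proof at all; the nearest argument it gives is for the adjacent ``root consistent'' proposition, and that argument is an induction on \emph{frames}, not on the layers of the root-graph algorithm. The difference matters. There, whether $v$ is a root of frame $f_{k+1}$ is tested against \emph{all} roots of frame $f_k$ that $v$ reaches --- a set determined by $G[v]$ together with the induction hypothesis alone --- rather than against a moving per-creator window $R$ whose contents also depend on events outside $G[v]$. Your proof strategy would go through if you reformulated root promotion frame-wise in that style (or restated the proposition so that the two runs are compared only on vertices below which the two chains contain the same events); as written against the active-set algorithm, the eviction scenario above is a counterexample to the key step, not just a gap in exposition.
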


\dfnn{Consistent root}{
	Two chains $G_1$ and $G_2$ are root consistent, if for every $v$ contained in both chains, $v$ is a root of $j$-th frame in $G_1$, then $v$ is a root of $j$-th frame in $G_2$.}

\begin{prop}
	For any two consistent OPERA chains $G_1$ and $G_2$, they are root consistent. 
\end{prop}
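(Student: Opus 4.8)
The plan is to prove the statement by induction on the frame number $j$, exploiting the fact that whether a vertex $v$ is a root, and if so which frame it belongs to, is a predicate determined entirely by the reachable subgraph $G[v]$ together with the weight labelling on that subgraph. Since $G_1 \sim G_2$ means $G_1[v] = G_2[v]$ for every $v$ common to both chains, any quantity that is a deterministic function of $G[v]$ must take the same value in the two chains. Root status and frame index are exactly such quantities, so root consistency should fall out once the induction is arranged correctly.

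For the base case $j = 1$, I would observe that $R_1$ consists precisely of the leaf events, and being a leaf is a purely local property: $v$ is a leaf iff it is the first event of its creator, i.e. it has no (self-)reference. This is readable off $G[v]$ alone, so $v \in R_1$ in $G_1$ iff $v \in R_1$ in $G_2$. For the inductive step, assume the two chains agree on root membership and frame index, for all common vertices, on every frame $1,\dots,j$. Suppose $v$ is a root of frame $j+1$ in $G_1$. By the definition of $R_{j+1}$, $v$ reaches more than $2\mathcal{W}/3$ validating power among the roots of the earlier frames, and the set of roots reachable from $v$ is fixed by the vertices and edges of $G_1[v]$. Because $G_1[v] = G_2[v]$ the reachable vertex set is identical in $G_2$; by the inductive hypothesis each such reachable root carries the same frame index in both chains, and by the definition of \emph{Creator} the weight of each root equals the stake of its locally recorded creator $cr(\cdot)$, the same in both chains. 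Hence the validation score of $v$ (the summed weight of the reachable roots) is identical in $G_1$ and $G_2$, so it crosses the $2\mathcal{W}/3$ threshold in one chain iff it does in the other, and the assigned frame index $j+1$ agrees as well (its determinism from the reachable roots being guaranteed by the layering theorem for $\phi_{LPL}$ together with the Consistent root graphs proposition). Thus $v$ is a frame-$(j+1)$ root in $G_2$, closing the induction; since $\sim$ is symmetric the converse direction holds too, giving full root consistency.

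The main obstacle I anticipate is justifying that the threshold quantity $2\mathcal{W}/3$ and the per-root weights are themselves consistent between the two chains: the root predicate is defined against the global total validating power $\mathcal{W}$, so if the two nodes held divergent stake tables the argument would break even when $G_1[v]=G_2[v]$. I would address this by appealing to the stake-synchronization and checkpoint mechanism of Section~\ref{se:staking}, under which the weight of each account, hence of each creator and of each root, and the total $\mathcal{W}$, are agreed across honest nodes at the relevant checkpoint; with that in place the weight function is a fixed labelling of the common vertices and the induction proceeds cleanly.

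A secondary point requiring care is that the clause ``reaches the roots of the previous frames'' must be read as reachability \emph{within} $G[v]$, not against some globally larger active root set, so that the inductive hypothesis applies to exactly the roots contributing to $v$'s validation score. Once this is fixed, no new machinery is needed: the proof is essentially the remark that the map $v \mapsto (\text{root?}, \text{frame})$ factors through $G[v]$ and the consistent weight labelling, and consistency of chains makes that input identical for every shared vertex.
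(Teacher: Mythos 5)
Your proof is correct and follows essentially the same route as the paper's: induction on the frame number, with the first root set (leaf events) as the base case, and an inductive step that uses $G_1[v]=G_2[v]$ to transfer the set of reached roots --- and hence the $2\mathcal{W}/3$ threshold test --- from one chain to the other. Your additional care about the consistency of per-root weights and of the total validating power $\mathcal{W}$ (via stake synchronization) is a point the paper's proof silently assumes, so your write-up is, if anything, slightly more complete, but it is not a different argument.
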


By consistent chains, if $G_1 \sim G_2$ and $v$ belongs to both chains, then $G_1[v]$ = $G_2[v]$.
We can prove the proposition by induction. For $j$ = 0, the first root set is the same in both $G_1$ and $G_2$. Hence, it holds for $j$ = 0. Suppose that the proposition holds for every $j$ from 0 to $k$. We prove that it also holds for $j$= $k$ + 1.
Suppose that $v$ is a root of frame $f_{k+1}$ in $G_1$. 
Then there exists a set $S$ reaching 2/3 of members in $G_1$ of frame $f_k$ such that $\forall u \in S$ ($u\hbefore v$). As $G_1 \sim G_2$, and $v$ in $G_2$, then $\forall u \in S$ ($u \in G_2$). Since the proposition holds for $j$=$k$, 
As $u$ is a root of frame $f_{k}$ in $G_1$, $u$ is a root of frame $f_k$ in $G_2$. Hence, the set $S$ of 2/3 members $u$ happens before $v$ in $G_2$. So $v$ belongs to $f_{k+1}$ in $G_2$.

Thus, all nodes have the same consistent root sets, which are the root sets in $G^C$. Frame numbers are consistent for all nodes.\\

\dfnn{Flag table}{A flag table stores reachability from an event block to the weighted roots.}

Given two consisten S-OPERA chains $G_1$ and $G_2$ ($G_1 \sim G_2$), we have the following

\dfnn{Consistent Flag Table}{For event block $v$ in both $G_1$ and $G_2$, and $G_1 \sim G_2$, then the flag tables of $v$ are consistent if they are the same in both chains.}

From the above, the root sets of $G_1$ and $G_2$ are consistent. If $v$ contained in $G_1$, and $v$ is a root of $j$-th frame in $G_1$, then $v$ is a root of $j$-th frame in $G_i$. Since $G_1 \sim G_2$, $G_1[v] = G_2[v]$. The reference event blocks of $v$ are the same in both chains. Thus the flag tables of $v$ of both chains are the same.\\

\dfnn{Consistent Validation Score}{For event block $v$ in both $G_1$ and $G_2$, and $G_1 \sim G_2$, the validation score of $v$ in $G_1$ is identical with that of $v$ in $G_2$.}

Since the flag tables of $v$ are the same on both chains, the roots are consistent. The weights of the roots are also consistent across the nodes. The validation score of an event block $v$ is the sum of the validating powers of the roots. Thus, the validation score of $v$ is also consistent across the nodes.

\begin{prop}[Consistent Clotho]
	A root $r_k$ in the frame $f_{a+3}$ can nominate a root $r_a$ as Clotho if more than 2n/3 roots in the frame $f_{a+1}$ dominate $r_a$ and $r_k$ dominates the roots in the frame $f_{a+1}$.
\end{prop}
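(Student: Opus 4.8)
The statement is best read as a \emph{consistency} claim, parallel to the Consistent root, Consistent Flag Table, and Consistent Validation Score propositions that precede it (and to the Global Clotho proposition proved earlier): for two consistent S-OPERA chains $G_1 \sim G_2$, if a root $r_a$ is nominated as a Clotho by a root $r_k$ in $G_1$, then $r_k$ nominates $r_a$ as a Clotho in $G_2$ as well. The plan is to reduce the nomination condition entirely to data that has already been shown to agree across consistent chains --- namely the frame membership of roots and the happened-before (domination) relation --- so that the whole Clotho predicate transfers verbatim from $G_1$ to $G_2$.

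First I would fix the vertices appearing in the nomination: the nominee $r_a$, the nominator $r_k$, and the witnessing roots of frame $f_{a+1}$, and assume these are common to both chains (this is the honest/consistent-node hypothesis that also underlies Global Clotho). By the Consistent root proposition, a vertex that is a root of the $j$-th frame in $G_1$ is a root of the $j$-th frame in $G_2$; hence $r_a$, $r_k$, and every witnessing root keep their frames $f_a$, $f_{a+3}$, $f_{a+1}$ in $G_2$, and the \emph{entire} root set of frame $f_{a+1}$ is identical in the two chains. Next I would invoke the preservation of happened-before across consistent chains: for vertices common to both, $u \hbefore v$ in $G_1$ exactly when $u \hbefore v$ in $G_2$. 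Since ``dominate'' is this reachability relation, each of the frame-$f_{a+1}$ roots that dominate $r_a$ in $G_1$ also dominates $r_a$ in $G_2$, and $r_k$ still dominates those witnessing roots in $G_2$. Because the frame-$f_{a+1}$ root set is the same, the threshold count (more than $2n/3$ roots, equivalently more than $2\mathcal{W}/3$ validating power) is preserved, so the defining condition for $r_k$ nominating $r_a$ holds in $G_2$.

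The main obstacle is the \emph{existence} step rather than the logical transfer: I must guarantee that $r_k$ and the full witnessing set of frame-$f_{a+1}$ roots actually appear in $G_2$, since consistency of chains only constrains the vertices they share. This is where the assumption that both chains are honest does the real work --- by property $G^C \sqsubseteq G_i$ together with eventual synchronization, the subgraph $G_1[r_k] = G_2[r_k]$ already contains every ancestor needed for the nomination --- and once $G_1[r_k] = G_2[r_k]$ is in hand, the domination checks and the counting are immediate. A secondary point worth stating cleanly is the reconciliation of the ``$2n/3$ roots'' phrasing in the Clotho definition with the stake-weighted ``$2\mathcal{W}/3$ validating power'' threshold used throughout \stakedag; since the Consistent Validation Score proposition already yields identical validating scores on both chains, either formulation of the threshold survives the same argument.
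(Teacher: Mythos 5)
Your reading of the statement is the right one, but you should know that the paper itself offers \emph{no} proof of this proposition: as printed, ``Consistent Clotho'' is a verbatim restatement of the Clotho definition from the appendix, dropped into a \texttt{prop} environment with nothing after it. The closest thing the paper has to a proof is the earlier \emph{Global Clotho} proposition in the Clotho-selection section, whose entire argument is: assume the block $r$ exists on both honest nodes, note that consistency gives $G[r]=G'[r]$, and conclude that whatever $r$ nominates in one chain it nominates in the other. Your proposal is exactly this argument, carried out carefully --- you make explicit the two ingredients the paper leaves implicit, namely that frame membership of roots transfers (the Consistent root proposition) and that happened-before/domination is preserved between consistent chains --- so it is essentially the same approach, just actually written down.

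Two refinements worth noting. First, your ``main obstacle'' (existence of $r_k$ and the witnessing roots in $G_2$) dissolves more cleanly than your appeal to $G^C \sqsubseteq G_i$ and ``eventual synchronization'' suggests: the nomination condition says $r_k$ dominates the witnessing roots of frame $f_{a+1}$, and those roots dominate $r_a$, so \emph{every} vertex involved is an ancestor of $r_k$ and hence lies inside $G_1[r_k]$; once $r_k$ is common to both chains, the definition of consistent chains gives $G_1[r_k]=G_2[r_k]$ outright, and nothing else is needed. Second, your claim that ``the entire root set of frame $f_{a+1}$ is identical in the two chains'' is stronger than consistency warrants ($G_2$ may contain frame-$f_{a+1}$ roots not yet visible in $G_1$), but it is also unnecessary: the threshold ``more than $2n/3$ roots'' (or $2\mathcal{W}/3$ validating power) is a lower bound measured against the global constant $n$ (resp.\ $\mathcal{W}$), so exhibiting the transferred witnessing set suffices, and extra roots in $G_2$ cannot break the condition.
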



\begin{prop}[Consistent Atropos]
	An Atropos is a Clotho that is decided as final. Event blocks in the subgraph rooted at the Atropos are also final events. Atropos blocks form a Main-chain, which allows time consensus ordering and responses to attacks.
\end{prop}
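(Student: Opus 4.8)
The plan is to reduce Consistent Atropos to the already-established Consistent Clotho proposition together with the determinism of the consensus-time assignment. First I would observe that an Atropos is, by definition, a Clotho that has been assigned a final consensus time; hence for two consistent chains $G_1 \sim G_2$ it suffices to show (i) that a Clotho $c$ common to both chains is assigned the same consensus time in each, and (ii) that the subgraph $G[c]$ and its induced ordering coincide across the two chains. Once both hold, an Atropos in $G_1$ is an Atropos in $G_2$ with identical consensus time, and the finality of the dependent events follows at once.

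For (i), I would invoke the Consistent Clotho proposition to guarantee that $c$ is a Clotho in both $G_1$ and $G_2$, and the consistent-layering theorem for $\phi_{LPL}$ to guarantee $\phi^{G_1}(v) = \phi^{G_2}(v)$ for every vertex $v$ common to the two chains. The consensus time is produced by the topological sort of Algorithm~\ref{algo:topoordering}, whose \emph{SortByLayer} routine breaks ties deterministically by layer, then Lamport timestamp, then event-block hash. Because $G_1 \sim G_2$ gives $G_1[c] = G_2[c]$, the vertex set $V_c$, the layers, the timestamps, and the hashes feeding \emph{SortByLayer} are all identical on both nodes; therefore the sorted order, and with it the assigned consensus time, is identical.

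For (ii), finality of the subgraph follows from the Fork-free proposition (every $G[c]$ is fork-free) and from the fact that the deterministic processing order of Clothos makes the running ``already-ordered'' set $U$ evolve identically on both nodes, so each previously finalized vertex is skipped at the same point and each new vertex receives the same ordering index. The Main-chain, being the append-only list of these identically ordered Atropos blocks, is then consistent across honest nodes, which is precisely the responsiveness property asserted in the statement.

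The main obstacle I anticipate is step (i)'s dependence on the global order in which Clothos are finalized: the consensus time of one Clotho can in principle be influenced by the set $U$ of vertices already ordered when earlier Clothos were processed. I would handle this by an induction on the sorted Clotho sequence $Q = \emph{SortByLayer}(C)$, showing first that $Q$ itself is identical across $G_1$ and $G_2$ (again by the deterministic tiebreaking), and then maintaining the invariant that $U$ is the same set on both chains after processing the first $j$ Clothos; the base case is the empty $U$, and the inductive step reuses the identity of $G[c_{j+1}]$ guaranteed by $G_1 \sim G_2$ to conclude that the same vertices are appended in the same order at stage $j+1$.
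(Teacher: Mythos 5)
Your proposal is correct and takes essentially the same approach as the paper: the paper's argument (given for the equivalent ``Global Atropos'' proposition in the Atropos Selection section) likewise reduces to the Clotho-consistency result and then invokes the determinism of the layering and of the topological sort to conclude that both nodes produce the identical sorted Clotho list, hence identical consensus times and identical Atropos blocks. Your explicit tie-breaking argument (layer, Lamport timestamp, hash) and the induction maintaining equality of the processed set $U$ across both chains simply fill in details that the paper leaves implicit.
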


\begin{prop}[Consistent Main-chain]
	For any two honest nodes, the Main-chain is the same on both nodes.
\end{prop}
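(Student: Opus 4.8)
The plan is to reduce the claim to the global consistent chain $G^C$ and then splice together the consistency results already established for layering, roots, Clothos and Atropos. Recall that the Main-chain is the ordered list of Atropos blocks together with the subgraphs reachable from them. So proving that the two Main-chains coincide amounts to three sub-claims: (i) the two nodes agree on \emph{which} blocks are Atropos; (ii) they agree on the \emph{consensus time}, and hence the \emph{order}, of those Atropos; and (iii) for each Atropos $a$ they agree on the reachable subgraph $G[a]$.

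First I would invoke the Global Atropos proposition: for honest nodes $n_i$ and $n_j$, if $a$ is an Atropos in $H_i$ then $a$ is an Atropos in $H_j$ with the same consensus time. This settles (i) and the consensus-time part of (ii) for every Atropos present on both nodes. The ordering in (ii) then follows because the Main-chain orders Atropos via the deterministic \texttt{SortByLayer}/topological sort, and by the Consistent layering theorem the layer, Lamport timestamp and hash of any common vertex agree; hence the sort key is identical and the sorted Atropos list is the same on both nodes. For (iii) I would use the notion of Consistent chains: since $n_i$ and $n_j$ are honest they are consistent, so for any $a$ common to both chains $G_i[a] = G_j[a]$. The Fork-free proposition guarantees that $G[a]$ is unambiguous (no competing forked ancestors), so the subgraph attached to each Atropos entry is well-defined and identical. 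Combining (i)--(iii), the ordered sequence of Atropos together with their attached subgraphs --- that is, the Main-chain --- is the same on both nodes.

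The main obstacle is the asynchrony gap: at a fixed instant one honest node may already have finalized an Atropos that the other has not yet discovered, so strictly the two Main-chains can agree only up to the shorter finalized prefix. I would address this by arguing that $G^C$ captures exactly the commonly finalized portion, and that through peer synchronization (Algorithm~\ref{al:syncevents}) every finalized Atropos and its fork-free subgraph eventually propagate to every honest node. Thus the Main-chains are \emph{prefix-consistent} at all times and become \emph{equal} in the limit. The care needed here is to read the proposition as agreement on the common finalized prefix rather than as instantaneous pointwise equality; once that reading is fixed, the earlier Global Clotho, Global Atropos and Fork-free results do all the heavy lifting and the remaining argument is a routine induction on the topologically sorted list of Atropos.
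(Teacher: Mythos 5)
Your proposal is correct and takes essentially the same route as the paper, whose entire proof is a one-liner: the Main chain is the ordered list of finalized Atropos blocks, and since Clothos and Atropos are consistent on any two honest nodes, the chains coincide. Your version is more thorough than the paper's --- notably the explicit handling of the deterministic sort keys and the prefix-consistency reading needed to cope with asynchrony, a gap the paper's argument silently glosses over.
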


\dfnn{Main-chain}{Main chain is a special sub-graph of the OPERA chain that stores  Atropos vertices.}

The Main chain stores the ordered list of finalized Atropos blocks. Since Clothos and Atropos blocks are consistent on any two honest nodes, the Main chain is the same on both nodes. The Main chain
can be used for fast track of finalized blocks, transaction search and for detecting potential conflicting transactions.

For global state, \stakedag\ protocol use a similar algorithm like in our ONLAY paper~\cite{onlay19}.
From local view H-OPERA chain, a node can estimate the subgraph of $H$ that other nodes have known at the time.
The intuition is that a node $n_i$, from its history $H$, conservatively knows that the other node $n_j$ may just know the event blocks under the top event of $n_j$.
Let $t_i$ denote the top event of a node $n_i$ in the local chain $H$. From the local state $H$, a node can determine all top events of all nodes in $H$.

\end{document}